\theoremstyle{plain}
\newtheorem{theorem}{Theorem}
\theoremstyle{definition}
\newtheorem{definition}{Definition}
\newtheorem{example}{Example}
\theoremstyle{remark}
\newtheorem*{remark}{Remark}
\newcommand{\btheta}{{\boldsymbol{\theta}}}
\newcommand{\wtheta}{{\boldsymbol{\widehat\btheta_\beta}}}
\newcommand{\be}{\begin{equation}}
\newcommand{\ee}{\end{equation}}
\newcommand{\ba}{\begin{eqnarray}}
\newcommand{\ea}{\end{eqnarray}}
\newcommand{\bee}{\begin{equation*}}
\newcommand{\eee}{\end{equation*}}
\newcommand{\baa}{\begin{eqnarray*}}
\newcommand{\eaa}{\end{eqnarray*}}
\begin{document}

\title{\textbf{Generalized Wald-type Tests based on  Minimum Density Power Divergence Estimators}}
\author{Basu, A.$^{1}$; Mandal, A.$^{1}$; Martin, N.$^{2}$ and Pardo, L.$%
^{3} $ \\
$^{1}${\small Indian Statistical Institute, Kolkata 700108, India}\\
$^{2}${\small Department of Statistics, Carlos III University of Madrid,
28038 Getafe (Madrid), Spain}\\
$^{3}${\small Department of Statistics and O.R. Complutense University of
Madrid, 28040 Madrid, Spain} }

\maketitle

\begin{abstract}
In testing of hypothesis the robustness of the tests is an important
 concern. Generally, the maximum likelihood based tests are most efficient under standard regularity conditions, but they
are highly non-robust even under small deviations from the assumed conditions. In this paper we have proposed generalized Wald-type tests based on  minimum density power divergence
estimators for parametric hypotheses. This method avoids the use of nonparametric density estimation and the bandwidth selection. The trade-off between efficiency and robustness is controlled by a tuning parameter $\beta$. The asymptotic distributions
of the test statistics are chi-square with appropriate degrees of freedom. 
The performance of the proposed tests are explored through simulations and real data analysis.
\end{abstract}

\noindent
{\textbf{AMS 2010 subject classification}}\textbf{:} Primary 62F35; Secondary 62F03.

\noindent{\textbf{Keywords}}: Density Power Divergence, Robustness, Tests of Hypotheses.

\section{Introduction}

\cite{MR1665873} have introduced the minimum density power divergence estimator (MDPDE) that minimizes the density power divergence measure. The 
robustness properties of these estimators have been 
studied in detail by several authors.  However, 
the problem of hypothesis testing based on the density power divergence
measures has only recently been explored (\citealp{MR3011625,basu2013}). The results indicate that the test statistics based on 
the density power divergence measures, when the parameters are estimated by the MDPDEs, have substantially superior performance compared to the 
likelihood ratio test in the presence of outliers. On the other hand, in pure data  these
tests are often competitive to the likelihood ratio tests. So the tests based on the density power divergence  
are very useful practical tools in robust statistics. 

However, sometimes it is not easy to get the expression of the density power
divergence measure between the population densities estimated under the null
hypothesis and under the unrestricted parameter space. Another  
difficulty is that in many situations the asymptotic distributions of the above test
statistics are linear combinations of independent chi-square variables. While none of these
problems are insurmountable, they lead to some reduction in the appeal of 
these otherwise useful tests. 

To overcome these problems we propose a class of generalized Wald-type test
statistics based on   minimum density
power divergence estimators of the model parameters. Our aim is to manipulate
the statistics in a manner that allows us to exploit the nice properties of the 
MDPDEs in constructing the test statistics, and yet come up with asymptotic distributions
which are pure chi-squares rather than linear combinations of independent chi-squares. 
On the whole, we expect to make the tests theoretically and operationally far more simple, 
without compromising the efficiency and the robustness properties of the tests. 

In Section \ref{sec:notation} we have introduced the necessary notations and 
described the existing results. We have proposed 
 Wald-type test statistics in the context of the simple null 
hypothesis in Section \ref{sec:stat}. The problem of the composite null hypothesis is considered in
Section \ref{sec:composite}. Some simulation studies are reported in Section \ref{sec:simulation} to explore the 
behavior of the proposed families of test statistics. Section \ref{sec8} provides a brief  description on the choice of the tuning parameter $\beta$ associated with the proposed Wald-type test. Section \ref{SEC:concluding} has some concluding remarks, and
the proofs are given in the Appendix. 

Throughout this paper, we will refer to the usual Wald test statistic based on the maximum likelihood estimator as the ``classical Wald statistic", whereas the new robust statistics are referred to as ``proposed Wald-type test statistics". 

\section{Notational Set up and Existing Concepts} \label{sec:notation}

Let $\mathcal{G}$ denote the set of all distributions having densities with
respect to a dominating measure (generally the Lebesgue measure or the counting measure). Given any two densities $g$ and $f$ in $%
\mathcal{G}$, the density power divergence between them is defined, as the
function of a nonnegative tuning parameter $\beta $, as 
\begin{equation}
d_\beta(g,f)=\left\{ 
\begin{array}{ll}
\int \left\{ f^{1+\beta }(x)-\left( 1+\frac{1}{\beta }\right) f^{\beta
}(x)g(x)+\frac{1}{\beta }g^{1+\beta }(x)\right\} dx, & \text{for}\mathrm{~}%
\beta >0, \\[2ex] 
\int g(x)\log \left( \frac{g(x)}{f(x)}\right) dx, & \text{for}\mathrm{~}%
\beta =0.%
\end{array}%
\right.  \label{2.1}
\end{equation}%
The case corresponding to $\beta =0$ may be derived from the general case by
taking the continuous limit as $\beta \rightarrow 0$, and in this case $%
d_0(g,f)$ is the classical Kullback-Leibler divergence. The quantities
defined in equation (\ref{2.1}) are genuine divergences in the sense $%
d_\beta(g,f)\geq 0$ for all $g,f\in \mathcal{G}$ and all $\beta \geq 0$,
and $d_\beta(g,f)$ is equal to zero if and only if the densities $g$ and $%
f$ are identically equal. More details about inference based on divergence measures can be found in \cite{MR2183173} and \cite{MR2830561}.

We consider a parametric model of densities $\{f_{\btheta}:{%
\btheta}\in \Theta \subset {\mathbb{R}}^{p}\}$ and suppose that
we are interested in the estimation of ${\btheta}$. Let $G$
represent the distribution function corresponding to the density $g$. The
minimum density power divergence functional $T_\beta(G)$ at $G$ is
defined by the requirement $d_\beta(g,f_{T_\beta(G)})=\min_{{{%
\btheta}}\in \Theta }d_\beta(g,f_{\btheta})$.
Clearly the term $\int g^{1+\beta }(x)dx$ in (\ref{2.1}) has no role in the
minimization of $d_\beta(g,f_{\btheta})$ over ${{%
\btheta}}\in \Theta $. Thus the essential objective function to
be minimized in the computation of the minimum density power divergence
functional $T_\beta(G)$ reduces to 
\begin{equation}
\int f_{%
\btheta}^{1+\beta }(x)dx-\left( 1+\frac{1}{\beta }\right)
\int f_{\btheta}^{\beta }(x)g(x)dx = \int f_{%
\btheta}^{1+\beta }(x)dx-\left( 1+\frac{1}{\beta }\right)
\int f_{\btheta}^{\beta }(x)dG(x).  \label{2.21}
\end{equation}
Notice that in the above objective function the density $g$ appears only as
a linear term (unlike, say, the objective function of  the minimum Hellinger
distance functional where the square root of the density $g$ is the relevant
quantity). Given a random sample $X_{1},\ldots ,X_{n}$ from the
distribution $G$, we can therefore approximate the objective function in (\ref{2.21}) by
replacing $G$ with its empirical estimate $G_{n}$. For a given tuning
parameter $\beta $, the MDPDE $\widehat\btheta%
_\beta$ of $\btheta$ can be obtained by minimizing 
\begin{equation}
 \begin{split}
\int f_{\btheta}^{1+\beta }(x)dx & - \left( 1+\frac{1}{\beta }%
\right) \int f_{\btheta}^{\beta }(x)dG_{n}(x)\\
 &=\int f_{{%
\btheta}}^{1+\beta }(x)dx-\left( 1+\frac{1}{\beta }\right) 
\frac{1}{n}\sum_{i=1}^{n}f_{\btheta}^{\beta }(X_{i})
 =\frac{1}{n}\sum_{i=1}^{n}V_\btheta(X_{i})  \label{2.22}
\end{split}%
\end{equation}
over ${\btheta}\in \Theta $, where $V_{\btheta%
}(x)=\int f_{\btheta}^{1+\beta }(y)dy-\left( 1+\frac{1}{\beta 
}\right) f_{\btheta}^{\beta }(x)$. In the special case $\beta
=0$, we must minimize the expression $-\frac{1}{n}\sum_{i=1}^{n}\log f_{%
\btheta}(X_{i})$; the corresponding minimizer turns out to be
the maximum likelihood estimator (MLE) of $\btheta$. The
minimization of the expression in (\ref{2.22}) over ${\btheta}
$ does not require the use of a nonparametric density estimate of the true
unknown distribution $G$. Existing theory (e.g. \citealp{de1992smoothing})
shows that in general there is hardly any advantage in introducing
smoothing for such functionals which may be empirically estimated using the
empirical distribution function alone, except in very special cases. It is therefore natural to substitute $G$ with $G_n$ in such situations. 

Let $\boldsymbol{u}_{\btheta}(x)=\frac{\partial }{\partial 
\btheta}\log f_{\btheta}(x)$ be the score
function of the model. Under differentiability of the model the minimization
of the objective function in equation (\ref{2.22}) leads to an estimating
equation of the form 
\begin{equation}
\frac{1}{n}\sum_{i=1}^{n}\boldsymbol{u}_{\btheta}(X_{i})f_{{%
\btheta}}^{\beta }(X_{i})-\int \boldsymbol{u}_{{\boldsymbol{%
\theta }}}(x)f_{\btheta}^{1+\beta }(x)dx=\boldsymbol{0}_p,
\label{2.2}
\end{equation}%
which is an unbiased estimating equation under the model. Here $\boldsymbol{0}_p$
denotes the null vector of dimension $p$.  Since the
corresponding estimating equation weights the score $\boldsymbol{u}_{%
\btheta}(X_{i})$ with the power of the density $f_{\boldsymbol{%
\theta }}^{\beta }(X_{i})$, the outlier resistant behavior of the estimator
is intuitively apparent. See \cite{MR1665873} and \cite{MR1859416} for
more details.

The functional $T_\beta(\cdot)$ is Fisher consistent; it takes the value $%
\btheta_0$ when the true density $g=f_{\btheta%
_0}$ is in the model. When it is not, $\btheta_{\beta
}^{g}=T_\beta(G)$ represents the best fitting parameter. For brevity we
will suppress the $g$ superscript in the notation for $\btheta%
_\beta^{g}$; $f_{\btheta_\beta}$ is the model element
closest to the density $g$ in the minimum density power divergence sense
corresponding to the tuning parameter $\beta$.

Let $g$ be the true data generating density, and $\btheta%
_\beta=T_\beta(G)$ be the best fitting parameter. To set up the
notation we define the quantities 
\begin{eqnarray}
 \boldsymbol{J}_\beta(\btheta) &=&\int \boldsymbol{u}_{{%
\btheta}}(x)\boldsymbol{u}_{\btheta}^{T}(x)f_{{%
\btheta}}^{1+\beta }(x)dx \nonumber\\
& +&\int \{\boldsymbol{I}_{\boldsymbol{%
\theta }}(x)-\beta \boldsymbol{u}_\btheta(x)\boldsymbol{u}_{%
\btheta}^{T}(x)\}\{g(x)-f_\btheta(x)\}f_{{%
\btheta}}^{\beta }(x)dx,  \label{2.3} \\
 \boldsymbol{K}_\beta(\btheta) &=&\int \boldsymbol{u}_{{%
\btheta}}(x)\boldsymbol{u}_{\btheta}^{T}(x)f_{{%
\btheta}}^{2\beta }(x)g(x)dx-\boldsymbol{\xi }_\beta({{%
\btheta}})\boldsymbol{\xi }_\beta^{T}({\btheta%
}),  \label{2.4}
\end{eqnarray}
where $\boldsymbol{\xi }_\beta({\btheta})=\int \boldsymbol{%
u}_\btheta(x)f_\btheta^{\beta }(x)g(x)dx$, and $%
\boldsymbol{I}_\btheta(x)=-\frac{\partial }{\partial 
\btheta^T}\boldsymbol{u}_\btheta(x)$ is the so-called Fisher information function at the model. The following results, proved in
\cite{MR2830561}, form the basis of our subsequent developments. We assume
that the conditions D1--D5 of \citeauthor{MR2830561} (2011, p. 304) are true. Then
the following results hold.

\begin{enumerate}
\item The minimum density power divergence estimating equation (\ref{2.2})
has a consistent sequence of roots $\wtheta$, i.e. $\widehat{\boldsymbol{%
\theta }}_\beta\underset{n \rightarrow \infty }{\overset{p}{%
\longrightarrow }}\btheta_\beta$.

\item $n^{1/2}(\wtheta-\btheta%
_\beta)$ has an asymptotic multivariate normal distribution with 
  mean vector zero and covariance matrix $\boldsymbol{J}^{-1}\boldsymbol{K}%
\boldsymbol{J}^{-1}$, where $\boldsymbol{J}=\boldsymbol{J}_\beta({{%
\btheta}}_\beta)$, $\boldsymbol{K}=\boldsymbol{K}_\beta({{%
\btheta}}_\beta)$ are as in (\ref{2.3}) and (\ref{2.4}). 
\end{enumerate}
In Sections \ref{sec:stat} and \ref{sec:composite} we describe the proposed Wald-type statistics for the simple and composite null hypothesis respectively.

\section{The Simple Null Hypothesis}\label{sec:stat}

Let $X_{1}, \ldots, X_{n}$ be a random sample of size $n$ 
from a distribution modeled by the probability density function $f_\btheta(x)$, where 
$\btheta \in \Theta \subset \mathbb{R}^p$ and $x \in \mathcal{X}$, the sample space.
In this section we  define a family of Wald-type
test statistics based on MDPDEs  for testing the null hypothesis%
\begin{equation}
H_0:\btheta = \btheta_0~{\rm against}~H_{1}:\btheta \neq \btheta_0 ,
\label{3.1}
\end{equation}%
which will henceforth be referred to as the proposed Wald-type statistics.

\begin{definition}
Let $\wtheta$ be the MDPDE of $\btheta$. The family of proposed Wald-type test statistics for testing the null hypothesis 
in (\ref{3.1}) is given by 
\begin{equation}
W_n=n\left( \wtheta-\btheta_0\right)^{T}
\left( \boldsymbol{J}_\beta^{-1}( \btheta_0) 
\boldsymbol{K}_\beta( \btheta_0) 
\boldsymbol{J}_\beta^{-1}( \btheta_0) \right)^{-1}
\left( \wtheta-\btheta_0\right) ,
\label{3.2}
\end{equation}%
where the matrices $\boldsymbol{J}_\beta( \btheta_0)$ 
and $\boldsymbol{K}_\beta( \btheta_0)$ are as defined in (\ref{2.3}) 
and (\ref{2.4}) respectively.
\end{definition}

When $\beta = 0$, $\wtheta$ coincides with the maximum likelihood 
estimator $\widehat\btheta$ of $\btheta$, and $\boldsymbol{J}_\beta^{-1}
(\boldsymbol{\theta }_0)\boldsymbol{K}_\beta(\btheta_0)\boldsymbol{J}_\beta^{-1}(\btheta_0)$ coincides
with the inverse of the Fisher information matrix, and thus we recover the classical Wald 
statistic. 
Many interesting applications of the classical Wald test statistic can be seen in the statistical literature.
For example, \cite{lemonte2013nonnull} has presented an application on generalized linear models with dispersion covariates. 

The asymptotic distribution of $W_n$ is presented in the next theorem. The result follows 
easily from the asymptotic distribution of MDPDEs considered in Section \ref{sec:notation},
so we skip the proof. 

\begin{theorem}
The asymptotic null distribution of the proposed Wald-type test statistic given in (\ref%
{3.2}) is a chi-square distribution with $p$ degrees of freedom. 
\end{theorem}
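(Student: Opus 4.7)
The plan is to reduce this to a direct application of the asymptotic normality result for the MDPDE stated in Section~\ref{sec:notation}, together with the standard fact that a quadratic form in a multivariate normal vector with its inverse covariance as the kernel matrix is chi-square distributed.

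First I would observe that under the null hypothesis $H_0:\btheta=\btheta_0$, the true data-generating density is $g=f_{\btheta_0}$, so Fisher consistency of the functional $T_\beta$ gives $\btheta_\beta=T_\beta(G)=\btheta_0$. Invoking Result~2 from Section~\ref{sec:notation}, the MDPDE then satisfies
\begin{equation*}
\sqrt{n}\bigl(\wtheta-\btheta_0\bigr)\;\xrightarrow{\mathcal{L}}\;N_p\!\left(\boldsymbol{0}_p,\;\boldsymbol{J}_\beta^{-1}(\btheta_0)\,\boldsymbol{K}_\beta(\btheta_0)\,\boldsymbol{J}_\beta^{-1}(\btheta_0)\right),
\end{equation*}
where I note that under $H_0$ the second integral in the definition of $\boldsymbol{J}_\beta(\btheta_\beta)$ in (\ref{2.3}) vanishes because $g=f_{\btheta_0}$, which simplifies the form of $\boldsymbol{J}$, but this simplification is not strictly needed for the argument.

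Next I would apply the classical quadratic-form lemma: if $\boldsymbol{Z}\sim N_p(\boldsymbol{0}_p,\boldsymbol{\Sigma})$ with $\boldsymbol{\Sigma}$ positive definite, then $\boldsymbol{Z}^{T}\boldsymbol{\Sigma}^{-1}\boldsymbol{Z}\sim\chi^2_p$. Setting $\boldsymbol{Z}_n=\sqrt{n}(\wtheta-\btheta_0)$ and $\boldsymbol{\Sigma}=\boldsymbol{J}_\beta^{-1}(\btheta_0)\boldsymbol{K}_\beta(\btheta_0)\boldsymbol{J}_\beta^{-1}(\btheta_0)$, the test statistic in (\ref{3.2}) is exactly $\boldsymbol{Z}_n^{T}\boldsymbol{\Sigma}^{-1}\boldsymbol{Z}_n$, so the continuous mapping theorem delivers $W_n\xrightarrow{\mathcal{L}}\chi^2_p$.

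The only substantive point to verify, and the mildest obstacle in the argument, is that $\boldsymbol{\Sigma}=\boldsymbol{J}_\beta^{-1}(\btheta_0)\boldsymbol{K}_\beta(\btheta_0)\boldsymbol{J}_\beta^{-1}(\btheta_0)$ is positive definite so that its inverse exists and the quadratic-form lemma applies; this is implicit in the regularity conditions D1--D5 of \citeauthor{MR2830561} (2011) that were already assumed in Section~\ref{sec:notation}. Since the kernel matrix in (\ref{3.2}) is evaluated at the deterministic point $\btheta_0$ rather than at $\wtheta$, there is no need to invoke Slutsky's theorem to handle a plug-in estimator of the covariance. This is exactly why the authors remark that the proof can be skipped.
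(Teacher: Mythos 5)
Your proposal is correct and is exactly the argument the paper has in mind: the authors explicitly skip the proof, stating that it "follows easily from the asymptotic distribution of MDPDEs considered in Section \ref{sec:notation}," which is precisely the reduction you carry out (Fisher consistency giving $\btheta_\beta=\btheta_0$ under $H_0$, asymptotic normality of $\sqrt{n}(\wtheta-\btheta_0)$, and the standard chi-square quadratic-form lemma). Your observations that the kernel matrix is evaluated at the fixed point $\btheta_0$ so no Slutsky step is needed, and that positive definiteness of the sandwich covariance is guaranteed by conditions D1--D5, correctly identify the only points requiring any care.
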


In many cases the power function of this testing procedure cannot be
calculated explicitly. In the following theorem we present a useful
asymptotic result for approximating the power function of the proposed Wald-type test
statistics given in (\ref{3.2}).

\begin{theorem}
Let $\btheta^*$ be the true value of parameter with $%
\btheta^*\neq \btheta_0.$ Then the convergence 
\begin{equation*}
n^{1/2}\left( l\left( \wtheta\right) -l\left( 
\btheta^*\right) \right) \underset{n\rightarrow
\infty }{\overset{\mathcal{L}}{\longrightarrow }}\mathit{N}_p(\boldsymbol{0}_p,\sigma %
_{W}^{2}\left( \btheta^*\right) )
\end{equation*}%
holds, where 
\begin{equation*}
l( \btheta) =\left( \btheta-\boldsymbol{%
\theta }_0\right) ^{T}\left( \boldsymbol{J}_\beta^{-1}\left( \boldsymbol{%
\theta }_0\right) \boldsymbol{K}_\beta\left( \btheta%
_0\right) \boldsymbol{J}_\beta^{-1}( \btheta_0)
\right) ^{-1}\left( \btheta-\btheta_0\right),
\end{equation*}%
and 
\begin{equation}
\sigma_W^{2}\left( \btheta^{\ast
}\right) =4\left( \btheta^*-\btheta%
_0\right) ^{T}\left( \boldsymbol{J}_\beta^{-1}\left( \boldsymbol{\theta 
}^*\right) \boldsymbol{K}_\beta\left( \btheta^{\ast
}\right) \boldsymbol{J}_\beta^{-1}\left( \btheta^{\ast
}\right) \right) ^{-1}\left( \btheta^*-\btheta%
_0\right) .  \label{3.3}
\end{equation}
\label{th:main}
\end{theorem}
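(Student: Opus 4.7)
The plan is to invoke a delta-method argument, exploiting the fact that $l(\btheta)$ is an explicit quadratic form in $\btheta$. Since $l$ is $C^\infty$ and its Taylor expansion around any point terminates at second order, the main ingredient I need is the MDPDE asymptotics recalled in Section~\ref{sec:notation}. Under the assumption that the data are generated by $f_{\btheta^*}$, the best-fitting parameter coincides with $\btheta^*$, so $\wtheta \overset{p}{\longrightarrow} \btheta^*$ and $\sqrt{n}(\wtheta - \btheta^*)$ is asymptotically $N_p(\boldsymbol{0}_p,\, \boldsymbol{J}_\beta^{-1}(\btheta^*)\boldsymbol{K}_\beta(\btheta^*)\boldsymbol{J}_\beta^{-1}(\btheta^*))$.

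Setting $\boldsymbol{M}_0 = (\boldsymbol{J}_\beta^{-1}(\btheta_0)\boldsymbol{K}_\beta(\btheta_0)\boldsymbol{J}_\beta^{-1}(\btheta_0))^{-1}$, direct differentiation gives $\nabla l(\btheta^*) = 2\boldsymbol{M}_0(\btheta^*-\btheta_0)$, and since $l$ is quadratic the expansion around $\btheta^*$ is exact with a single quadratic remainder. Multiplying through by $\sqrt{n}$ yields
\begin{equation*}
\sqrt{n}\bigl(l(\wtheta) - l(\btheta^*)\bigr) = 2(\btheta^*-\btheta_0)^T \boldsymbol{M}_0\,\sqrt{n}(\wtheta - \btheta^*) + R_n,
\end{equation*}
where $R_n = n^{-1/2}\bigl(\sqrt{n}(\wtheta-\btheta^*)\bigr)^T\boldsymbol{M}_0\bigl(\sqrt{n}(\wtheta-\btheta^*)\bigr) = n^{-1/2}\cdot O_p(1) = o_p(1)$ by the $\sqrt{n}$-asymptotic normality of the MDPDE. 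Slutsky's theorem applied to the leading term then delivers the Gaussian limit claimed in the statement, with mean zero and variance obtained by sandwiching the asymptotic covariance of $\sqrt{n}(\wtheta-\btheta^*)$ with the deterministic vector $2\boldsymbol{M}_0(\btheta^*-\btheta_0)$.

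I do not expect any genuine obstacle here: the MDPDE asymptotics from Section~\ref{sec:notation} do all the heavy lifting, and the polynomial nature of $l$ removes the usual need to control a Lagrange-type remainder via a continuity argument. The only real work is algebraic bookkeeping, namely keeping careful track of the distinction between $\boldsymbol{J}_\beta, \boldsymbol{K}_\beta$ evaluated at $\btheta_0$ (inside the definition of $l$) and at $\btheta^*$ (arising from the MDPDE covariance), and then verifying that the resulting quadratic form rearranges to the compact expression $\sigma_W^2(\btheta^*)$ displayed in~(\ref{3.3}). This rearrangement is the one place where I would have to pause and check the calculation carefully, since it is the only nontrivial step between the generic delta-method output and the specific formula claimed in the theorem.
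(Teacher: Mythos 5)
Your argument is essentially the paper's own: the Appendix proof is exactly a first-order Taylor expansion of $l$ at $\wtheta$ around $\btheta^*$ followed by the observation that $n^{1/2}\left(l\left(\wtheta\right)-l\left(\btheta^*\right)\right)$ has the same limit law as the linearization $n^{1/2}\left(\partial l/\partial\btheta\right)^T_{\btheta=\btheta^*}\left(\wtheta-\btheta^*\right)$. Your version is if anything slightly cleaner, since the exactness of the quadratic expansion lets you identify the remainder explicitly as $n^{-1/2}O_p(1)$ instead of invoking a generic $o_p\left(\Vert\wtheta-\btheta^*\Vert\right)$ term.

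The one step you defer --- checking that the sandwich ``rearranges to the compact expression'' in (\ref{3.3}) --- deserves the caution you give it, because it does not rearrange that way. Write $v=\btheta^*-\btheta_0$, $\boldsymbol{\Sigma}_0=\boldsymbol{J}_\beta^{-1}(\btheta_0)\boldsymbol{K}_\beta(\btheta_0)\boldsymbol{J}_\beta^{-1}(\btheta_0)$ and $\boldsymbol{\Sigma}^*=\boldsymbol{J}_\beta^{-1}(\btheta^*)\boldsymbol{K}_\beta(\btheta^*)\boldsymbol{J}_\beta^{-1}(\btheta^*)$. Your gradient $\nabla l(\btheta^*)=2\boldsymbol{\Sigma}_0^{-1}v$ combined with the MDPDE covariance $\boldsymbol{\Sigma}^*$ gives the limiting variance $4\,v^T\boldsymbol{\Sigma}_0^{-1}\boldsymbol{\Sigma}^*\boldsymbol{\Sigma}_0^{-1}v$, whereas (\ref{3.3}) reads $4\,v^T(\boldsymbol{\Sigma}^*)^{-1}v$; these coincide only when $\boldsymbol{\Sigma}_0=\boldsymbol{\Sigma}^*$, which is not the case for a generic $\btheta^*\neq\btheta_0$. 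So your derivation is sound and the variance it produces is the correct delta-method output; the mismatch lies in the displayed formula (\ref{3.3}), which the paper's one-line proof never actually computes. (A related typographical slip in the statement: since $l$ is scalar, the limit is a univariate normal, not $\mathit{N}_p$; your computation implicitly gets this right as well.)
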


\begin{proof}
See the Appendix.
\hspace*{\fill}\bigskip
\end{proof}

\begin{remark}
We  approximate the power function $\beta_{W_n}$ of the proposed Wald-type
test statistics at $\btheta^*$ by%
\begin{align}
 \beta_{W_n}\left( \btheta^*\right) &= P_{\btheta^*} \left(
W_n>\chi_{p,\alpha }^{2}\right) \nonumber\\
&= P_{\btheta^*} \left( l\left( \wtheta\right) -l\left( 
\btheta^*\right) >\frac{\chi_{p,\alpha }^{2}}{n}-l\left( 
\btheta^*\right) \right) \nonumber\\
&= P_{\btheta^*} \left( n^{1/2}\left( l\left( \widehat\btheta_{\beta
}\right) -l\left( \btheta^*\right) \right) >n^{1/2}%
\left( \frac{\chi_{p,\alpha }^{2}}{n}-l\left( \btheta^{\ast
}\right) \right) \right) \nonumber\\
&= P_{\btheta^*} \left( n^{1/2}\frac{\left( l\left( \widehat\btheta_{\beta
}\right) -l\left( \btheta^*\right) \right) }{
\sigma_{W}\left( \btheta^*\right) }>\frac{n^{1/2}}{%
\sigma_{W}\left( \btheta^*\right) }\left( 
\frac{\chi_{p,\alpha }^{2}}{n}-l\left( \btheta^*\right)
\right) \right) \nonumber\\
&=1-\Phi_n\left( \frac{n^{1/2}}{\sigma_{W}\left( 
\btheta^*\right) }\left( \frac{\chi_{p,\alpha }^{2}}{n}%
-l\left( \btheta^*\right) \right) \right),
\label{n}
\end{align}
where $\Phi_{n}(\cdot)$ is a sequence of distribution functions tending uniformly
to the standard normal distribution function $\Phi(\cdot).$ Here 
$\chi^2_{p,\alpha}$ is the $100(1-\alpha)$ percentile point of chi-square distribution with $p$ degrees of freedom, i.e. $P(\chi^2_p>\chi^2_{p,\alpha})=\alpha$.
It is clear that 
\begin{equation*}
\lim_{n\rightarrow \infty }\beta_{W_n} (\btheta^*) =1,
\end{equation*}%
for all $\alpha \in \left( 0,1\right).$ Therefore, the test is consistent in
the sense of Fraser \citep{MR0093863}.
\end{remark}

Based on this result given in (\ref{n}) we can calculate the required sample size in order 
to get a given value for the power.  If we want to get a power of $\beta^*=\beta_{W_n} (\btheta^*)$, we need a sample size $n$ given by
\begin{equation}
 n=[n^*]+1, \ n^* = \frac{A+B + \sqrt{A (A+2B)}}{2 l^2(\btheta^*)},
\end{equation}
where $[x]$ is the largest integer less than or equal to $x$, $A=\sigma^2_W(\btheta^*) \left(\Phi^{-1}(1-\beta^*)\right)^2$ and $B = \frac{1}{2}\chi^2_{p,\alpha} l(\btheta^*)$. 

Now we are going to derive the asymptotic distribution of $W_n$ under
contiguous alternative hypotheses described by 
\begin{equation}
H_{1,n}:\btheta_{n}=\btheta_0+ n^{-1/2}\boldsymbol{d%
},  \label{3.4}
\end{equation}
where ${\boldsymbol{d}}$ is a fixed vector in $\mathbb{R}^p$ such that  $\btheta_{n} \in \Theta  \subset \mathbb{R}^p$ for all $n$.
\begin{theorem}
Under the contiguous alternative hypotheses given in (\ref{3.4})  the
asymptotic distribution of the proposed Wald-type test statistic $W_n$ is a non-central
chi-square with $p$ degrees of freedom and non-centrality
parameter 
\begin{equation}
\delta =\boldsymbol{d}^{T}\boldsymbol{J}_\beta^{-1}\left( \btheta%
_0\right) \boldsymbol{K}_\beta( \btheta_0) 
\boldsymbol{J}_\beta^{-1}( \btheta_0) \boldsymbol{d}%
. \label{d}
\end{equation}
\label{Th:1}
\end{theorem}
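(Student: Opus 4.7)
The plan is to mimic the classical Wald-test derivation while replacing the MLE with the MDPDE and to show that the local mean-shift induced by the contiguous alternative produces a noncentrality of the stated form. First I would extend the asymptotic normality of the MDPDE quoted in Section~\ref{sec:notation} so that it holds along the sequence of densities $f_{\btheta_n}$ rather than under a fixed $g$. Writing the estimating function
\begin{equation*}
\boldsymbol{\Psi}_n(\btheta) = \frac{1}{n}\sum_{i=1}^n \boldsymbol{u}_\btheta(X_i)f_\btheta^\beta(X_i) - \int \boldsymbol{u}_\btheta(x)f_\btheta^{1+\beta}(x)\,dx,
\end{equation*}
the unbiasedness identity at the true parameter gives $E_{\btheta_n}[\boldsymbol{\Psi}_n(\btheta_n)] = \boldsymbol{0}_p$; the multivariate CLT together with the continuity of $\boldsymbol{K}_\beta$ in both $\btheta$ and in the data-generating density yields $n^{1/2}\boldsymbol{\Psi}_n(\btheta_n) \overset{\mathcal{L}}{\longrightarrow} N_p(\boldsymbol{0}_p,\boldsymbol{K}_\beta(\btheta_0))$; and a Taylor expansion of $\boldsymbol{\Psi}_n(\wtheta) = \boldsymbol{0}_p$ around $\btheta_n$ combined with the in-probability limit $-\boldsymbol{J}_\beta(\btheta_0)$ of the Jacobian produces
\begin{equation*}
n^{1/2}(\wtheta - \btheta_n) \overset{\mathcal{L}}{\longrightarrow} N_p\bigl(\boldsymbol{0}_p,\, \boldsymbol{J}_\beta^{-1}(\btheta_0)\boldsymbol{K}_\beta(\btheta_0)\boldsymbol{J}_\beta^{-1}(\btheta_0)\bigr).
\end{equation*}

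Next I would rewrite the deviation from the null value by adding and subtracting $\btheta_n$,
\begin{equation*}
n^{1/2}(\wtheta - \btheta_0) = n^{1/2}(\wtheta - \btheta_n) + n^{1/2}(\btheta_n - \btheta_0) = n^{1/2}(\wtheta - \btheta_n) + \boldsymbol{d},
\end{equation*}
so that Slutsky's theorem gives $n^{1/2}(\wtheta - \btheta_0) \overset{\mathcal{L}}{\longrightarrow} N_p(\boldsymbol{d}, \boldsymbol{\Sigma})$ with $\boldsymbol{\Sigma} = \boldsymbol{J}_\beta^{-1}(\btheta_0)\boldsymbol{K}_\beta(\btheta_0)\boldsymbol{J}_\beta^{-1}(\btheta_0)$. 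Since $W_n$ is the quadratic form $Y_n^T \boldsymbol{\Sigma}^{-1} Y_n$ in $Y_n = n^{1/2}(\wtheta - \btheta_0)$, the continuous mapping theorem yields $W_n \overset{\mathcal{L}}{\longrightarrow} Y^T \boldsymbol{\Sigma}^{-1} Y$ where $Y \sim N_p(\boldsymbol{d}, \boldsymbol{\Sigma})$. Writing $Z = \boldsymbol{\Sigma}^{-1/2}Y \sim N_p(\boldsymbol{\Sigma}^{-1/2}\boldsymbol{d}, I_p)$ identifies $Y^T\boldsymbol{\Sigma}^{-1}Y = Z^TZ$ as a noncentral $\chi_p^2$ with noncentrality $\boldsymbol{d}^T \boldsymbol{\Sigma}^{-1} \boldsymbol{d}$, recovering the expression in (\ref{d}).

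The main obstacle I anticipate is step one: verifying that the asymptotic normality of $\wtheta$ persists along the \emph{sequence} of laws $f_{\btheta_n}$, not merely under a fixed data-generating density as in the background result of Section~\ref{sec:notation}. Two routes seem natural. One is a direct Taylor expansion of $\boldsymbol{\Psi}_n$, for which one needs uniform control of the second-order remainder on a shrinking $n^{-1/2}$-neighbourhood of $\btheta_0$, and a careful Lindeberg-type CLT whose triangular array is driven by $X_i \sim f_{\btheta_n}$. The other is an application of Le Cam's third lemma, using contiguity of $\{f_{\btheta_n}^{\otimes n}\}$ to $\{f_{\btheta_0}^{\otimes n}\}$ that follows from the LAN structure at $\btheta_0$. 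Either route requires the regularity conditions D1--D5 of \citeauthor{MR2830561}, and one must also check that $\boldsymbol{K}_\beta$ evaluated with the sequence of generating densities $g = f_{\btheta_n}$ converges to $\boldsymbol{K}_\beta(\btheta_0)$, which is a dominated-convergence argument under those same regularity hypotheses.
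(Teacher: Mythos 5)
Your argument is essentially the paper's own: the appendix proof uses exactly the decomposition $n^{1/2}(\wtheta-\btheta_0)=n^{1/2}(\wtheta-\btheta_n)+\boldsymbol{d}$, asserts asymptotic normality of $n^{1/2}(\wtheta-\btheta_n)$ under $H_{1,n}$, and then identifies $W_n$ as a quadratic form $\boldsymbol{X}^T\boldsymbol{X}$ with $\boldsymbol{X}=\boldsymbol{\Sigma}^{-1/2}n^{1/2}(\wtheta-\btheta_0)$; your extra care in justifying the normality along the drifting sequence of laws (triangular-array CLT or Le Cam's third lemma) is a genuine improvement, since the paper simply states that step without proof. One point to fix: what your computation yields is $\delta=\boldsymbol{d}^T\boldsymbol{\Sigma}^{-1}\boldsymbol{d}=\boldsymbol{d}^T\left(\boldsymbol{J}_\beta^{-1}(\btheta_0)\boldsymbol{K}_\beta(\btheta_0)\boldsymbol{J}_\beta^{-1}(\btheta_0)\right)^{-1}\boldsymbol{d}$, which is \emph{not} literally the expression printed in (\ref{d}) --- that display omits the outer inverse. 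Your value is the internally consistent one: it is what the paper's own construction of $\boldsymbol{X}$ produces, and it matches the pattern of the composite-hypothesis result, where the sandwich matrix does appear inverted inside the noncentrality parameter; so you should state that (\ref{d}) contains a typographical slip rather than claim to recover it verbatim.
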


\begin{proof}
See the Appendix.
\hspace*{\fill}\bigskip
\end{proof}

\begin{remark}
Using the last theorem we can get an approximation to the power
function  at the contiguous alternative $\btheta_{n}=\btheta_0+ n^{-1/2}
\boldsymbol{d}$ through the relation
\begin{equation*}
\beta_{W_n}\left( \btheta_n\right) =1-G_{\chi
_{p}^{2}(\delta )}\left( \chi_{p,\alpha }^{2}\right) ,
\end{equation*}%
where $G_{\chi_{p}^{2}(\delta )}\left( \chi_{p,\alpha }^{2}\right) $ is
the distribution function of a non-central chi-square, with $p$
degrees of freedom and non-centrality parameter 
$\delta$ given in equation (\ref{d}), 
evaluated at the point $\chi_{p,\alpha }^{2}.$ 

It may be observed that this expression permits us to obtain an approximation of the 
power function at a generic point $\btheta^*$, because we can consider $\boldsymbol{d} = n^{1/2}( \btheta^* - \btheta_0)$
and then $\btheta_n=\btheta^*$ .
\end{remark}

\begin{example} \label{Exp_example}
Let  $X_{1},\ldots ,X_{n}$ be a random sample of size $n$ from $%
\mathcal{E}xp(\theta )$, the exponential distribution with mean $\theta $. Suppose we want to
test the  hypothesis $H_0:\theta =\theta_0$, against $%
H_{1}:\theta \neq \theta_0$. It is 
easy to show that the minimum density power divergence estimator $%
\widehat{\theta }_\beta$ of $\theta $ can be obtained by
iteratively solving the equation%
\begin{equation}
\widehat{\theta }_\beta=\frac{\sum_{i=1}^{n}X_{i}\exp \left\{ -\frac{%
\beta X_{i}}{\widehat{\theta }_\beta}\right\} }{\sum_{i=1}^{n}\exp
\left\{ -\frac{\beta X_{i}}{\widehat{\theta }_\beta}\right\} -\frac{%
n\beta }{(1+\beta )^{2}}}   \label{3.5} .
\end{equation}%
Note that, putting $\beta=0$ in equation (\ref{3.5}), we get an explicit expression for  the MLE as $\widehat{\theta }%
_0=n^{-1}\sum_{i=1}^{n}X_{i}.$ Standard calculation with the matrices $\boldsymbol{J}_\beta( \btheta_0)$ 
and $\boldsymbol{K}_\beta( \btheta_0)$, as defined in (\ref{2.3}) and (\ref{2.4}) respectively, gives us
\begin{equation*}
n^{1/2}\left( \widehat{\theta }_\beta-\theta %
_0\right) \underset{n\rightarrow \infty }{\overset{\mathcal{L}}{\longrightarrow }}%
\mathit{N}(0,h(\beta )\theta_0^{2}),
\end{equation*}%
where 
\begin{equation}
h(\beta )=\frac{(1+\beta )^{2}P(\beta )}{(1+\beta ^{2})^{2}(1+2\beta )^{3}}%
,\mbox{ and }P(\beta )=1+4\beta +9\beta ^{2}+14\beta ^{3}+13\beta ^{4}+8\beta
^{5}+4\beta ^{6}.  \label{3.6}
\end{equation}
Therefore, for testing  hypothesis $H_0:\theta =\theta_0$, against $%
H_{1}:\theta \neq \theta_0$, the proposed Wald-type test statistic is given by
\begin{equation}
W_n = \frac{n(\widehat{\theta }_\beta-\theta_0)^{2}}{h(\beta )\theta_0^{2}%
} ,
\label{3.7}
\end{equation}
whose asymptotic distribution, under the null hypothesis, is a chi-square  with one 
degree of freedom. 
\end{example}
\section{The Composite Null Hypothesis}\label{sec:composite}

We will now consider the problem of testing composite null hypothesis.
In testing of a composite null hypothesis, the restricted parameter
space $\Theta_0\subset \Theta $ is often defined by a set of $r$ restrictions
of the form 
\begin{equation}
\boldsymbol{m}(\boldsymbol{\theta )=0}_{r} , \label{2.5}
\end{equation}%
 where $\boldsymbol{m}:\mathbb{R}^{p}\rightarrow \mathbb{R}^{r}$ (see \citealp{MR595165}). Assume that the $p\times r$ matrix 
\begin{equation}
\boldsymbol{M}( \btheta) =\frac{\partial \boldsymbol{m}%
^{T}(\boldsymbol{\theta )}}{\partial \btheta}  \label{2.6}
\end{equation}%
exists and is continuous in $\btheta$, and rank$\left( \boldsymbol{M}%
( \btheta) \right) =r$, where $r \leq p$. 
 Let $X_{1}, \ldots, X_{n}$ be a random sample of size $n$ 
from a distribution modeled by probability density function $f_\btheta(x)$, where 
$\btheta \in \Theta \subset \mathbb{R}^p$ and $x \in \mathcal{X}$, the sample space. Our interest is in testing 
the hypothesis 
 \begin{equation}
H_0:\btheta\in \Theta_0~\text{against}~H_{1}:\boldsymbol{%
\theta }\notin \Theta_0,  \label{4.1}
\end{equation}%
where $\Theta_0$ is a subset of the parameter space $\Theta$.

\begin{definition}
Let $\wtheta$ be the MDPDE of $\btheta$. 
The family of proposed Wald-type test statistics for testing the null hypothesis in 
(\ref{4.1}) is given by 
\begin{equation}
W_{n}=n\boldsymbol{m} ^{T}\left( \wtheta\right) \left[
\boldsymbol{M}^{T}\left( \wtheta\right)\boldsymbol{J}_\beta^{-1}
\left( \wtheta\right) \boldsymbol{K}_\beta\left( \wtheta\right)%
\boldsymbol{J}_\beta^{-1}\left( \wtheta\right)\boldsymbol{M}
\left( \wtheta\right)\right] ^{-1}\boldsymbol{m}\left( \wtheta\right) ,  \label{4.2}
\end{equation}%
where the matrices $\boldsymbol{J}%
_\beta, \boldsymbol{K}_{\beta}$ and $\boldsymbol{M}$ are defined in equations  (\ref%
{2.3}), (\ref{2.4}) and (\ref{2.6}) respectively, and the function $\boldsymbol{m}$ is defined
in (\ref{2.5}).
\end{definition}

As in the simple null hypotheses case, when $\beta=0$, the Wald-type test statistic reduces to the classical Wald test.
In the next theorem we present the asymptotic distribution of $W_{n}.$

\begin{theorem} \label{theorem_composite}
The asymptotic null distribution of the proposed Wald-type test statistics given in (\ref%
{4.2}) is  chi-square  with $r$ degrees of freedom.
\end{theorem}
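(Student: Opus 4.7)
The plan is to combine the asymptotic normality of the MDPDE stated in Section \ref{sec:notation} with the delta method applied to the constraint function $\boldsymbol{m}$, and then express $W_n$ as a quadratic form in an asymptotically normal vector whose weighting matrix is the inverse of its own covariance. Throughout, let $\btheta_0 \in \Theta_0$ denote the true parameter under $H_0$, so that $\boldsymbol{m}(\btheta_0) = \boldsymbol{0}_r$.

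First, from the results quoted in Section \ref{sec:notation}, $\wtheta$ is consistent for $\btheta_0$ under $H_0$, and
\begin{equation*}
n^{1/2}(\wtheta - \btheta_0) \xrightarrow{\mathcal{L}} N_p\bigl(\boldsymbol{0}_p, \boldsymbol{J}_\beta^{-1}(\btheta_0)\boldsymbol{K}_\beta(\btheta_0)\boldsymbol{J}_\beta^{-1}(\btheta_0)\bigr).
\end{equation*}
Next I would apply a first-order Taylor expansion to $\boldsymbol{m}$ around $\btheta_0$, using $\boldsymbol{m}(\btheta_0)=\boldsymbol{0}_r$ and the definition of $\boldsymbol{M}(\btheta)$ in (\ref{2.6}). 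This yields
\begin{equation*}
n^{1/2}\boldsymbol{m}(\wtheta) = \boldsymbol{M}^T(\btheta_0)\, n^{1/2}(\wtheta - \btheta_0) + o_P(1),
\end{equation*}
and hence, by Slutsky's theorem,
\begin{equation*}
n^{1/2}\boldsymbol{m}(\wtheta) \xrightarrow{\mathcal{L}} N_r\bigl(\boldsymbol{0}_r,\, \boldsymbol{\Sigma}(\btheta_0)\bigr),
\quad \boldsymbol{\Sigma}(\btheta_0) = \boldsymbol{M}^T(\btheta_0)\boldsymbol{J}_\beta^{-1}(\btheta_0)\boldsymbol{K}_\beta(\btheta_0)\boldsymbol{J}_\beta^{-1}(\btheta_0)\boldsymbol{M}(\btheta_0).
\end{equation*}
Because $\boldsymbol{M}(\btheta_0)$ has full column rank $r$ and $\boldsymbol{J}_\beta^{-1}\boldsymbol{K}_\beta\boldsymbol{J}_\beta^{-1}$ is positive definite, the matrix $\boldsymbol{\Sigma}(\btheta_0)$ is a nonsingular $r\times r$ covariance matrix.

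To pass from this normal limit to the chi-square distribution, I would invoke the continuity of $\btheta \mapsto \boldsymbol{M}(\btheta)$, $\btheta \mapsto \boldsymbol{J}_\beta(\btheta)$ and $\btheta \mapsto \boldsymbol{K}_\beta(\btheta)$ (which follow from the regularity conditions D1--D5 referenced in Section \ref{sec:notation}) together with consistency of $\wtheta$ to conclude
\begin{equation*}
\boldsymbol{M}^T(\wtheta)\boldsymbol{J}_\beta^{-1}(\wtheta)\boldsymbol{K}_\beta(\wtheta)\boldsymbol{J}_\beta^{-1}(\wtheta)\boldsymbol{M}(\wtheta) \xrightarrow{P} \boldsymbol{\Sigma}(\btheta_0).
\end{equation*}
Writing $W_n = \bigl(n^{1/2}\boldsymbol{m}(\wtheta)\bigr)^T\bigl[\boldsymbol{M}^T(\wtheta)\boldsymbol{J}_\beta^{-1}(\wtheta)\boldsymbol{K}_\beta(\wtheta)\boldsymbol{J}_\beta^{-1}(\wtheta)\boldsymbol{M}(\wtheta)\bigr]^{-1}\bigl(n^{1/2}\boldsymbol{m}(\wtheta)\bigr)$ and applying the continuous mapping theorem together with Slutsky, $W_n$ converges in law to $\boldsymbol{Z}^T\boldsymbol{\Sigma}^{-1}(\btheta_0)\boldsymbol{Z}$ with $\boldsymbol{Z}\sim N_r(\boldsymbol{0}_r,\boldsymbol{\Sigma}(\btheta_0))$, which is the standard quadratic form producing a $\chi^2_r$ distribution.

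The main technical obstacle is the delta-method step: one must verify that the linearization error $\boldsymbol{m}(\wtheta) - \boldsymbol{M}^T(\btheta_0)(\wtheta-\btheta_0)$ is $o_P(n^{-1/2})$. This will follow from differentiability of $\boldsymbol{m}$ with continuous derivative matrix $\boldsymbol{M}$ (already assumed after (\ref{2.6})) and the $n^{1/2}$-consistency of $\wtheta$. Checking that $\boldsymbol{\Sigma}(\btheta_0)$ is positive definite, so that the quadratic form is well-defined asymptotically, is the other detail to confirm, and it reduces to the full-rank assumption on $\boldsymbol{M}(\btheta)$ plus positive definiteness of $\boldsymbol{J}_\beta^{-1}\boldsymbol{K}_\beta\boldsymbol{J}_\beta^{-1}$ given by the regularity conditions.
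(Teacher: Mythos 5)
Your proposal is correct and follows essentially the same route as the paper's own proof: a first-order Taylor expansion of $\boldsymbol{m}$ at $\btheta_0$ using $\boldsymbol{m}(\btheta_0)=\boldsymbol{0}_r$, the asymptotic normality of $n^{1/2}(\wtheta-\btheta_0)$ to get $n^{1/2}\boldsymbol{m}(\wtheta)\to N_r(\boldsymbol{0}_r,\boldsymbol{M}^T\boldsymbol{J}_\beta^{-1}\boldsymbol{K}_\beta\boldsymbol{J}_\beta^{-1}\boldsymbol{M})$, the rank-$r$ condition for nonsingularity, and finally Slutsky to replace the covariance matrix at $\btheta_0$ by its consistent plug-in estimate at $\wtheta$. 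Your additional remarks on verifying the $o_P(n^{-1/2})$ linearization error and the positive definiteness of the limiting covariance only make explicit what the paper leaves implicit.
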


\begin{proof}
See the Appendix.
\hspace*{\fill}\bigskip
\end{proof}

It should be noted that the validity of the results presented are based on a crucial assumption that the unknown parameter vector $\btheta$ is an interior point in $\Theta$. A common violation occurs when the null hypothesis constrains some of the parameters to lie on the boundary of the parameter space. Asymptotic theory under nonstandard conditions, in the case of MLE and likelihood ratio test, has been developed in \cite{chernoff1954distribution}, \cite{chant1974asymptotic}, \cite{self1987asymptotic}, \cite{susko2013likelihood} and references therein.  One could, along similar lines, develop a theory for testing null hypothesis using proposed Wald-type statistics involving MDPDEs that cover parameters in the boundary of the parameter space. This will require a detailed investigation appropriate for a separate paper, and has not been carried out in the present work. For the rest of the paper we assume that the unknown parameter lies in the interior of the parameter space.

Consider the null hypothesis $H_0:\btheta\in \Theta_0\subset
\Theta $. By Theorem \ref{theorem_composite}, the null hypothesis is rejected if 
$W_{n}\geq \chi_{r,\alpha }^{2}$. The following theorem can be used to
approximate the power function. Assume that $\btheta^*\notin
\Theta_0$ is the true value of the parameter so that the unrestricted estimator $\widehat{\boldsymbol{%
\theta }}_\beta\overset{p}{\underset{n\rightarrow \infty }{%
\longrightarrow }}\btheta^*$. 

\begin{theorem} Suppose
\begin{equation*}
l^*(\btheta_1, \btheta_2) =\boldsymbol{m} ^{T}\left( 
\btheta_1 \right) \left[ \boldsymbol{M}^{T}(\btheta_2)%
\boldsymbol{J}_\beta^{-1}(\btheta_2)\boldsymbol{K}_{\beta
}(\btheta_2)\boldsymbol{J}_\beta^{-1}(\btheta_2)
\boldsymbol{M}(\btheta_2)\right] ^{-1}\boldsymbol{m}\left( 
\btheta_1\right).
\end{equation*}
Then
\begin{equation*}
n^{1/2}\left( l^*\left(\wtheta, \wtheta\right)
-l^*\left(\btheta^*, \btheta^*\right) \right) \underset{%
n\rightarrow \infty }{\overset{\mathcal{L}}{\longrightarrow }}\mathit{N}(0%
\boldsymbol{,\sigma }_{W}^{2}\left( \btheta^*\right) ),
\end{equation*}%
where
\begin{equation}
\sigma_{W}^{2}\left(  \btheta^*\right)
=\left( \frac{\partial l^*(\btheta, \btheta^*) }{%
\partial \btheta}\right)_{\btheta =\btheta^*
}^{T}\boldsymbol{J}_\beta^{-1}(\btheta^*)\boldsymbol{K}%
_\beta(\btheta^*)\boldsymbol{J}_\beta^{-1}(\btheta^*)
\left( \frac{\partial l^*(\btheta, \btheta^*) }{%
\partial \btheta}\right)_{\btheta =\btheta^*
}^{T}.  \label{4.3}
\end{equation}
\label{Th:10}
\end{theorem}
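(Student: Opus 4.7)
The plan is to apply the multivariate delta method to the scalar function $\phi(\btheta) := l^*(\btheta, \btheta)$ evaluated at the MDPDE $\wtheta$, using the asymptotic normality of $\wtheta$ recalled in Section~\ref{sec:notation}. Since the data-generating density is $f_{\btheta^*}$, $\wtheta$ is consistent for $\btheta^*$ and satisfies
$$
n^{1/2}(\wtheta - \btheta^*) \underset{n\to\infty}{\overset{\mathcal{L}}{\longrightarrow}} \mathit{N}_p\bigl(\boldsymbol{0}_p,\, \boldsymbol{J}_\beta^{-1}(\btheta^*)\boldsymbol{K}_\beta(\btheta^*)\boldsymbol{J}_\beta^{-1}(\btheta^*)\bigr).
$$

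Under the standing smoothness assumptions on $\boldsymbol{m}$, $\boldsymbol{M}$, $\boldsymbol{J}_\beta$ and $\boldsymbol{K}_\beta$, together with nonsingularity at $\btheta^*$ of the inner matrix $A(\btheta) := \boldsymbol{M}^T(\btheta)\boldsymbol{J}_\beta^{-1}(\btheta)\boldsymbol{K}_\beta(\btheta)\boldsymbol{J}_\beta^{-1}(\btheta)\boldsymbol{M}(\btheta)$, the map $\phi$ is continuously differentiable in a neighborhood of $\btheta^*$. A first-order Taylor expansion gives
$$
n^{1/2}\bigl(\phi(\wtheta) - \phi(\btheta^*)\bigr) = \nabla\phi(\btheta^*)^T\, n^{1/2}(\wtheta - \btheta^*) + o_p(1),
$$
and the delta method then yields an $\mathit{N}(0,v)$ limit with $v = \nabla\phi(\btheta^*)^T\, \boldsymbol{J}_\beta^{-1}(\btheta^*)\boldsymbol{K}_\beta(\btheta^*)\boldsymbol{J}_\beta^{-1}(\btheta^*)\, \nabla\phi(\btheta^*)$. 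The remaining task is to identify $\nabla\phi(\btheta^*)$ with $\bigl(\partial l^*(\btheta,\btheta^*)/\partial\btheta\bigr)_{\btheta=\btheta^*} = 2\,\boldsymbol{M}(\btheta^*)A(\btheta^*)^{-1}\boldsymbol{m}(\btheta^*)$; substituting this produces $\sigma_W^2(\btheta^*)$ as in~(\ref{4.3}).

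The main obstacle is precisely this last identification. Since $\phi$ is the restriction of $l^*(\btheta_1,\btheta_2)$ to the diagonal $\btheta_1 = \btheta_2$, the chain rule a priori generates a second contribution from differentiating the inverse sandwich $A(\btheta)^{-1}$ with respect to its argument. I expect the proof to argue that this second piece does not enter the limiting variance at the stated order: $A(\wtheta)^{-1}$ enters only as a consistent plug-in, and by Slutsky its fluctuations are absorbed in the $o_p(1)$ remainder of the Taylor expansion, leaving only the partial derivative with respect to the first argument of $l^*$ in formula~(\ref{4.3}).
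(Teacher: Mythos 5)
Your proposal follows essentially the same route as the paper's proof: expand $l^*$ to first order in its \emph{first} argument around $\btheta^*$, plug in the asymptotic normality of $n^{1/2}(\wtheta-\btheta^*)$ with sandwich covariance $\boldsymbol{J}_\beta^{-1}(\btheta^*)\boldsymbol{K}_\beta(\btheta^*)\boldsymbol{J}_\beta^{-1}(\btheta^*)$, and read off the variance in (\ref{4.3}). The obstacle you single out --- the chain-rule contribution from the second argument, i.e.\ from differentiating the inverse matrix $\left[\boldsymbol{M}^{T}(\btheta)\boldsymbol{J}_\beta^{-1}(\btheta)\boldsymbol{K}_{\beta}(\btheta)\boldsymbol{J}_\beta^{-1}(\btheta)\boldsymbol{M}(\btheta)\right]^{-1}$ in $\btheta$ --- is handled in the paper exactly as you anticipated: by the one-line assertion that $l^*(\wtheta,\wtheta)$ and $l^*(\wtheta,\btheta^*)$ have the same asymptotic distribution because $\wtheta$ is consistent for $\btheta^*$, after which only the partial derivative in the first slot survives. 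So you have reproduced the paper's argument, including its weakest step. Be aware, however, that the specific justification you offer for discarding that term (Slutsky absorbing the fluctuations of the plug-in matrix into the $o_p(1)$ remainder) does not actually close the gap at the $n^{1/2}$ scale: under the alternative $\boldsymbol{m}(\btheta^*)\neq\boldsymbol{0}_r$, the perturbation of the inverse sandwich matrix is $O_p(n^{-1/2})$ and, squeezed between two copies of $\boldsymbol{m}(\wtheta)\to\boldsymbol{m}(\btheta^*)\neq\boldsymbol{0}_r$, it contributes at exactly the order being tracked, so a fully rigorous version of the statement would involve the gradient of the diagonal map $\btheta\mapsto l^*(\btheta,\btheta)$ rather than only the first-slot partial derivative. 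The paper's proof glosses over this point in the same way; your attempt is faithful to it, but the consistency of $\wtheta$ alone is not a licence to ignore an $O_p(n^{-1/2})$ term inside a $\sqrt{n}$-scaled quantity.
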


\begin{proof}
See the Appendix.
\hspace*{\fill}\bigskip
\end{proof}

We may also find an approximation of the power of $W_{n}$ at an alternative
close to the null hypothesis. Let $\btheta_{n}\in \Theta -\Theta
_0$ be a given alternative, and let $\btheta_0$ be the element
in $\Theta_0$ closest to $\btheta_{n}$ in terms of the Euclidean distance.
One possibility to introduce contiguous alternative hypotheses in this context is
to consider a fixed $\boldsymbol{d}\in \mathbb{R}^p$ and to permit $%
\btheta_{n}$ move towards $\btheta_0$ as $n$ increases
through the relation 
\begin{equation}
H_{1,n}:\btheta_{n}=\btheta_0+n^{-1/2}\boldsymbol{d}.
\label{4.3.1}
\end{equation}%
A second approach is to relax the condition $\boldsymbol{m}\left( \boldsymbol{%
\theta }\right) =\boldsymbol{0}_r$ defining $\Theta_0.$ Let $\boldsymbol{\delta} \in 
\mathbb{R}^r$ and consider the following sequence  of parameters $\{\btheta
_{n}\}$ moving towards $\btheta_0$ according to the set up
\begin{equation}
H_{1,n}^*:\boldsymbol{m}\left( \btheta_{n}\right)
=n^{-1/2}\boldsymbol{\delta} .  \label{4.3.2}
\end{equation}%
Note that a Taylor series expansion of $\boldsymbol{m}\left( \btheta%
_{n}\right) $ around $\btheta_0$ yields 
\begin{equation}
\boldsymbol{m}\left( \btheta_{n}\right) =\boldsymbol{m}\left( 
\btheta_0\right) +\boldsymbol{M} ^{T}\left( \btheta%
_0\right) \left( \btheta_{n}-\btheta_0\right)
+o\left( \left\Vert \btheta_{n}-\btheta_0\right\Vert
\right) .  \label{4.4}
\end{equation}%
By substituting $\btheta_{n}=\btheta_0+n^{-1/2}%
\boldsymbol{d}$ in (\ref{4.4}) and taking into account that\textbf{\ }$%
\boldsymbol{m}( \btheta_0) =\boldsymbol{0}_r,$ we get 
\begin{equation}
\boldsymbol{m}\left( \btheta_{n}\right) =n^{-1/2}\boldsymbol{M} ^{T}%
( \btheta_0) \boldsymbol{d}+o\left( \left\Vert 
\btheta_{n}-\btheta_0\right\Vert \right) . \label{mt}
\end{equation}%
So the equivalence relationship between the hypotheses $H_{1,n}$ and $H_{1,n}^*$ is 
\begin{equation}
\boldsymbol{\delta }=%
\boldsymbol{M} ^{T}( \btheta_0) \boldsymbol{d} \mbox{ as } n \rightarrow \infty. 
\label{eqiv}
\end{equation}

In the following theorem we show the asymptotic distributions of the test statistics 
under the alternative hypotheses $H_{1,n}$ and $H_{1,n}^*$  as given in (\ref{4.3.1}) and (\ref{4.3.2}) respectively.
\begin{theorem} The asymptotic distribution of $W_n$ is given by
\begin{enumerate}
\item[i)] $W_{n}\underset{n \rightarrow \infty }{\overset{\mathcal{L}}{%
\longrightarrow }}\chi_{r}^{2}\left( \boldsymbol{d}^{T}\boldsymbol{M}\left( 
\btheta_0\right) \left[ \boldsymbol{M}^{T}(\btheta%
_0)\boldsymbol{J}_\beta^{-1}(\btheta_0)\boldsymbol{K}%
_\beta(\btheta_0)\boldsymbol{J}_\beta^{-1}(\boldsymbol{%
\theta }_0)\boldsymbol{M}(\btheta_0)\right] ^{-1}%
\boldsymbol{M} ^{T}( \btheta_0) \boldsymbol{d}\right) $
under $H_{1,n}$ given in (\ref{4.3.1}).

\item[ii)] $W_{n}\underset{n \rightarrow \infty }{\overset{\mathcal{L}}{%
\longrightarrow }}\chi_{r}^{2}\left( \boldsymbol{\delta }^{T}\left[ 
\boldsymbol{M}^{T}(\btheta_0)\boldsymbol{J}_\beta^{-1}(%
\btheta_0)\boldsymbol{K}_\beta(\btheta_0)%
\boldsymbol{J}_\beta^{-1}(\btheta_0)\boldsymbol{M}(%
\btheta_0)\right] ^{-1}\boldsymbol{\delta }\right) $ under $%
H_{1,n}^*$ given in (\ref{4.3.2}).
\label{Th:11}
\end{enumerate}
\end{theorem}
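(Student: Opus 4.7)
\medskip
\noindent\textbf{Proof plan for Theorem \ref{Th:11}.}

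The plan is to reduce both statements to a single quadratic form of an asymptotically normal vector, using a Taylor expansion of $\boldsymbol{m}$ at $\btheta_n$. First, under $H_{1,n}$ the true density is $f_{\btheta_n}$ with $\btheta_n\to\btheta_0$, and since all the matrix-valued functions in Section \ref{sec:notation} are continuous in $\btheta$, the asymptotic-normality result for the MDPDE gives
\bee
n^{1/2}(\wtheta-\btheta_n) \underset{n\rightarrow\infty}{\overset{\mathcal{L}}{\longrightarrow}} \mathit{N}_p\bigl(\boldsymbol{0}_p,\; \boldsymbol{J}_\beta^{-1}(\btheta_0)\boldsymbol{K}_\beta(\btheta_0)\boldsymbol{J}_\beta^{-1}(\btheta_0)\bigr),
\eee
so in particular $\wtheta\overset{p}{\longrightarrow}\btheta_0$.

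Next I would Taylor-expand $\boldsymbol{m}$ around $\btheta_n$,
\bee
\boldsymbol{m}(\wtheta)=\boldsymbol{m}(\btheta_n)+\boldsymbol{M}^T(\btheta_n)(\wtheta-\btheta_n)+o_p(n^{-1/2}),
\eee
multiply by $n^{1/2}$, and plug in (\ref{mt}) to obtain $n^{1/2}\boldsymbol{m}(\btheta_n)\to\boldsymbol{M}^T(\btheta_0)\boldsymbol{d}$. Combining this with the display above and the continuity of $\boldsymbol{M}$ at $\btheta_0$ yields
\bee
n^{1/2}\boldsymbol{m}(\wtheta) \underset{n\rightarrow\infty}{\overset{\mathcal{L}}{\longrightarrow}} \mathit{N}_r\bigl(\boldsymbol{M}^T(\btheta_0)\boldsymbol{d},\;\boldsymbol{\Sigma}(\btheta_0)\bigr),
\eee
where $\boldsymbol{\Sigma}(\btheta_0)=\boldsymbol{M}^T(\btheta_0)\boldsymbol{J}_\beta^{-1}(\btheta_0)\boldsymbol{K}_\beta(\btheta_0)\boldsymbol{J}_\beta^{-1}(\btheta_0)\boldsymbol{M}(\btheta_0)$.

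To finish part (i), I would use the consistency $\wtheta\overset{p}{\longrightarrow}\btheta_0$ and the continuity of $\boldsymbol{M}$, $\boldsymbol{J}_\beta$, $\boldsymbol{K}_\beta$ to conclude that the ``sandwich'' matrix inside $W_n$ converges in probability to $\boldsymbol{\Sigma}(\btheta_0)$. Then Slutsky's theorem rewrites $W_n$ as the quadratic form of $n^{1/2}\boldsymbol{m}(\wtheta)$ with respect to $\boldsymbol{\Sigma}^{-1}(\btheta_0)$, which by the standard characterization of non-central chi-square distributions (for Gaussian vectors sandwiched with the inverse of their covariance) is $\chi_r^2(\delta)$ with $\delta=\boldsymbol{d}^T\boldsymbol{M}(\btheta_0)\boldsymbol{\Sigma}^{-1}(\btheta_0)\boldsymbol{M}^T(\btheta_0)\boldsymbol{d}$, exactly as claimed. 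Part (ii) is obtained by the same argument: under $H_{1,n}^*$ we have $n^{1/2}\boldsymbol{m}(\btheta_n)=\boldsymbol{\delta}$ directly from (\ref{4.3.2}), so the asymptotic mean of $n^{1/2}\boldsymbol{m}(\wtheta)$ becomes $\boldsymbol{\delta}$ instead of $\boldsymbol{M}^T(\btheta_0)\boldsymbol{d}$, and the non-centrality parameter becomes $\boldsymbol{\delta}^T\boldsymbol{\Sigma}^{-1}(\btheta_0)\boldsymbol{\delta}$. The equivalence (\ref{eqiv}) makes the two results consistent.

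The delicate step is the very first one: the asymptotic normality of MDPDEs recalled in Section \ref{sec:notation} is stated for a \emph{fixed} data-generating density $g$, whereas under $H_{1,n}$ the sample is drawn from $f_{\btheta_n}$ with $\btheta_n$ moving with $n$. To justify the claimed triangular-array conclusion, I would argue by a mild uniform-in-$\btheta$ version of conditions D1--D5 in a neighborhood of $\btheta_0$: contiguity of $\{f_{\btheta_n}^{\otimes n}\}$ with respect to $\{f_{\btheta_0}^{\otimes n}\}$ (which follows from a quadratic-mean-differentiability argument for the parametric model), combined with continuity of $\boldsymbol{J}_\beta(\cdot)$ and $\boldsymbol{K}_\beta(\cdot)$ at $\btheta_0$, so that the limiting covariance of $n^{1/2}(\wtheta-\btheta_n)$ under $H_{1,n}$ coincides with the one obtained by evaluating at $\btheta_0$. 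Once this transfer is in place the rest of the proof is a routine Taylor-expansion-plus-Slutsky calculation.
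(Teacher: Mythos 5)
Your proposal follows essentially the same route as the paper: Taylor-expand $\boldsymbol{m}$ around $\btheta_n$, use (\ref{mt}) to identify the drift $\boldsymbol{M}^T(\btheta_0)\boldsymbol{d}$ (respectively $\boldsymbol{\delta}$ under $H_{1,n}^*$), deduce the asymptotic normality of $n^{1/2}\boldsymbol{m}(\wtheta)$, and conclude via the standard non-central chi-square characterization of the quadratic form. The only differences are matters of care rather than of method — you make explicit the Slutsky step replacing the sandwich matrix at $\wtheta$ by its value at $\btheta_0$ and the contiguity argument needed to transfer the MDPDE central limit theorem to the triangular array under $H_{1,n}$, both of which the paper takes for granted.
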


\begin{proof}
See the Appendix.
\hspace*{\fill}\bigskip
\end{proof}

While our theory is perfectly general, we discuss two special cases in Sections \ref{example_normal} and \ref{sec:weibull} -- those of the Normal and Weibull models -- where immediate applications our theory can be envisaged.

\subsection{The Normal Distribution}
\label{example_normal}
Under the $N(\mu ,\sigma ^{2})$ model, consider the problem of testing 
\begin{equation}
H_0:\mu =\mu_0\text{ against }H_{1}:\mu \neq \mu_0,  \label{A}
\end{equation}%
where $\sigma ^{2}$ is an unknown nuisance parameter. In this case the
 parameter space is given by $\Theta =\{(\mu ,\sigma
)\in {\mathbb{R}}^{2}|\mu \in {\mathbb{R}},\sigma \in {\mathbb{R}}^{+}\}$,
and the parameter space under the null distribution is $\Theta_0=\{(\mu ,\sigma )\in {\mathbb{R}}^{2}|\mu =\mu_0,\sigma
\in {\mathbb{R}}^{+}\}$. If we consider the function $m(%
\btheta)=\mu -\mu_0,$ where $\btheta=\left( \mu
,\sigma \right) ^{T}$, the null hypothesis $H_0$ can be written as 
\begin{equation*}
H_0:m(\btheta)=0.
\end{equation*}%
 We  observe that in
our case $\boldsymbol{M}( \btheta) =\left( 1,0\right)
^{T}$. Based on (\ref{2.2}) and taking into account the fact that $f_{%
\btheta}(x)$ is the normal density with mean $\mu $ and
variance $\sigma ^{2}$, the estimator $\widehat\btheta_{\beta
}=(\widehat{\mu }_\beta,\widehat{\sigma }_\beta)^{T}$ of $\boldsymbol{%
\theta }$ is the solution of the system of nonlinear
equations 
\begin{equation*}
\left\{ 
\begin{array}{c}
\frac{\partial }{\partial \mu }\frac{1}{\sigma ^{\beta }\left( 2\pi \right)
^{\frac{\beta }{2}}}\left( \frac{1}{n\beta }\sum_{i=1}^{n}\exp \left\{ -%
\frac{\beta }{2}\left( \frac{X_{i}-\mu }{{\sigma }}\right) ^{2}\right\} -%
\frac{1}{\left( 1+\beta \right) ^{3/2}}\right) =0 ,\\ 
\frac{\partial }{\partial \sigma }\frac{1}{\sigma ^{\beta }\left( 2\pi
\right) ^{\frac{\beta }{2}}}\left( \frac{1}{n\beta }\sum_{i=1}^{n}\exp
\left\{ -\frac{\beta }{2}\left( \frac{X_{i}-\mu }{{\sigma }}\right)
^{2}\right\} -\frac{1}{\left( 1+\beta \right) ^{3/2}}\right) =0.
\end{array}%
\right.
\end{equation*}%
Simple calculations yield the expressions%
\begin{equation*}
\boldsymbol{\boldsymbol{J}_\beta(\btheta_0)}=\frac{1}{%
\sqrt{1+\beta }\left( 2\pi \right) ^{\beta /2}\sigma ^{2+\beta }}\left( 
\begin{array}{cc}
\frac{1}{1+\beta } & 0 \\ 
0 & \frac{\beta ^{2}+2}{\left( 1+\beta \right) ^{2}}%
\end{array}%
\right),
\end{equation*}%
and 
\begin{equation*}
\boldsymbol{K}_\beta(\btheta)=\frac{1}{\sigma ^{2+2\beta
}\left( 2\pi \right) ^{\beta }}\left( \frac{1}{(1+2\beta )^{3/2}}\left( 
\begin{array}{cc}
1 & 0 \\ 
0 & \frac{4\beta ^{2}+2}{1+2\beta }%
\end{array}%
\right) -\left( 
\begin{array}{cc}
0 & 0 \\ 
0 & \frac{\beta ^{2}}{(1+\beta )^{3}}%
\end{array}%
\right) \right) .
\end{equation*}%
Therefore,%
\begin{equation*}
\boldsymbol{M}^{T}(\btheta_0)\boldsymbol{J}_\beta^{-1}(%
\btheta_0)\boldsymbol{K}_\beta(\btheta_0)%
\boldsymbol{J}_\beta^{-1}(\btheta_0)\boldsymbol{M}(%
\btheta_0)=\sigma ^{2}\frac{\left( \beta +1\right) ^{3}}{%
\left( 2\beta +1\right) ^{\frac{3}{2}}},
\end{equation*}
and 
\begin{eqnarray*}
W_{n} &=&n m ^{T}\left( \wtheta \right)
\left( 
\boldsymbol{M}^{T}(\wtheta)\boldsymbol{J}_\beta^{-1}(%
\wtheta)\boldsymbol{K}_\beta(\wtheta)%
\boldsymbol{J}_\beta^{-1}(\wtheta)\boldsymbol{M}(%
\wtheta)
\right)
^{-1}m\left( \wtheta\right) \\
&=&n\frac{\left( \widehat{\mu }_\beta-\mu_0\right) ^{2}\left( 2\beta
+1\right) ^{\frac{3}{2}}}{\widehat\sigma_\beta^{2}\left( \beta +1\right) ^{3}},
\end{eqnarray*}%
and on the basis of Theorem \ref{theorem_composite}, we have 
\begin{equation}
W_{n}=n\frac{\left( \widehat{\mu }_\beta-\mu_0\right) ^{2}\left(
2\beta +1\right) ^{\frac{3}{2}}}{\widehat\sigma_\beta^{2}\left( \beta +1\right) ^{3}}%
\underset{n \rightarrow \infty }{\overset{\mathcal{L}}{%
\longrightarrow }}\chi_{1}^{2}. \label{norm_comp}
\end{equation}
We observe that for $\beta=0$ we get the classical Wald test statistic for testing the hypothesis mentioned in (\ref{A}).

\subsection{The Weibull Distribution}
\label{sec:weibull}
While the normal model is the most important model where the test statistic described in Section \ref{example_normal} would be useful, it is also important to explore the applicability of these tests in other models to demonstrate the general nature of the method. In testing composite hypotheses, therefore, we have included numerical results based on the Weibull distribution in our subsequent numerical study, together with the results on the normal model. Here we describe the proposed Wald-type test statistics for the Weibull case. 
The probability density function of $\mathcal{W}(\sigma,p)$, a two parameter Weibull distribution, is given by
\bee
f_{\boldsymbol{\theta }}(x) = \frac{p}{\sigma} \left(\frac{x}{\sigma}\right)^{p-1} \exp\left\{ -\left(\frac{x}{\sigma}\right)^p\right\}, \ x>0,
\eee
where $\boldsymbol{\theta }%
=(\sigma, p)^{T}$, and the parameter space is given by $\Theta =\{(\sigma, p
)|\sigma \in {\mathbb{R}}^{+}, p \in {\mathbb{R}}^{+}\}$. A comprehensive comparison of different estimation methods for the Weibull distribution is given in \cite{teimouri2013comparison}. 
We are interested in testing
\begin{equation}
H_{0}:\sigma =\sigma _{0}\text{ against }H_{1}:\sigma \neq \sigma _{0} , \label{w}
\end{equation}
where $p$ is a nuisance parameter. Let us consider the
function $m(\boldsymbol{\theta })=\sigma -\sigma _{0}$. Then, as in the normal case which was considered in Section \ref{example_normal}, the null hypothesis $H_{0}$ can be
written as
\begin{equation*}
H_{0}:m(\boldsymbol{\theta })=0,
\end{equation*}
and  $\boldsymbol{M}(\btheta) =\left( 1,0\right) ^{T}.$

Let us define
\bee
\xi_{\alpha, \beta}( \boldsymbol{\theta }) = \int_0^\infty \left(\frac{x}{\sigma}\right)^{\alpha} f_{\boldsymbol{\theta }}^\beta(x) dx ,
\eee
and
\bee
\eta_{\alpha, \beta, \gamma}( \boldsymbol{\theta }) = \int_0^\infty \left(\frac{x}{\sigma}\right)^{\alpha} \left[ \log \left(\frac{x}{\sigma}\right)\right]^{\beta} f_{\boldsymbol{\theta }}^\gamma(x) dx .
\eee
It can be shown that
\be
\xi_{\alpha, \beta}( \boldsymbol{\theta })
=\left(\frac{p}{\sigma}\right)^{\beta-1} \beta^{-\frac{\beta p - \beta + \alpha +1}{p}} 
         \Gamma\left(\frac{\beta p - \beta + \alpha +1}{p}\right) ,
\label{xi}
\ee
and
\be
\eta_{\alpha, \beta, \gamma}( \boldsymbol{\theta }) =  \sigma \left(\frac{p}{\sigma}\right)^\gamma \int_0^\infty y^{\alpha + \gamma p - \gamma} 
            ( \log y)^\beta \exp( -\gamma y^p) dy  ,            
\label{eta}            
\ee
where $\Gamma(\cdot)$ denote the gamma function.
Note that $\xi_{\alpha, \gamma}( \boldsymbol{\theta }) = \eta_{\alpha, 0, \gamma}( \boldsymbol{\theta })$. For $\beta \neq 0$ the value of $\eta_{\alpha, \beta, \gamma}( \boldsymbol{\theta })$ is calculated using numerical integration.
Let us define
\bee
\boldsymbol{R}_\gamma( \boldsymbol{\theta }) = \int_0^\infty \boldsymbol{u}_{\boldsymbol{\theta }}(x) \boldsymbol{u}_{\boldsymbol{\theta }}^T(x) f_{\boldsymbol{\theta }}^\gamma(x) dx = \left(\begin{array}{cc}
 r_{11} & r_{12} \\  r_{12} & r_{21}
\end{array}\right),
\eee
where $\boldsymbol{u}_{\boldsymbol{\theta }}(x)$, the score function of the Weibull distribution, is given by
\bee
\boldsymbol{u}_{\boldsymbol{\theta }}(x) = \frac{\partial \log f_{\boldsymbol{\theta }}(x)}{\partial \boldsymbol{\theta}}
= \left(\begin{array}{c}
  -\frac{p}{\sigma} + \frac{p}{\sigma}\left(\frac{x}{\sigma}\right)^{p} \\
  \frac{1}{p} + \log\left(\frac{x}{\sigma}\right) - \left(\frac{x}{\sigma}\right)^p \log \left(\frac{x}{\sigma}\right) 
\end{array}\right).
\eee
Then it can be shown that
\bee
r_{11} = \left(\frac{p}{\sigma}\right)^2 \left\{  \xi_{0, \gamma}( \boldsymbol{\theta })
 - 2 \xi_{p, \gamma}( \boldsymbol{\theta })  +   \xi_{2p, \gamma}( \boldsymbol{\theta }) \right\} ,
\eee
\bee
r_{12} =  \frac{p}{\sigma}   \left\{ -\frac{1}{p} \xi_{0, \gamma}( \boldsymbol{\theta }) - \eta_{0, 1, \gamma}( \boldsymbol{\theta })  + 2 \eta_{p, 1, \gamma}( \boldsymbol{\theta })
+ \frac{1}{p}  \xi_{p, \gamma}( \boldsymbol{\theta })  -   \eta_{2p, 1, \gamma}( \boldsymbol{\theta })      \right\} ,
\eee
and
\baa
r_{22} &=& \frac{1}{p^2} \xi_{0, \gamma}( \boldsymbol{\theta }) + \eta_{0, 2, \gamma}( \boldsymbol{\theta })  +   \eta_{2p, 2, \gamma}( \boldsymbol{\theta })   + \frac{2}{p} \eta_{0, 1, \gamma}( \boldsymbol{\theta })   - 2 \eta_{p, 2, \gamma}( \boldsymbol{\theta })  - \frac{2}{p} \eta_{p, 1, \gamma}( \boldsymbol{\theta }) .
\eaa
Now 
\be
\boldsymbol{J}_\gamma( \boldsymbol{\theta }) = \int_0^\infty \boldsymbol{u}_{\boldsymbol{\theta }}(x) \boldsymbol{u}_{\boldsymbol{\theta }}^T(x) f_{\boldsymbol{\theta }}^{1+\gamma}(x) dx = \boldsymbol{R}_{1+\gamma}( \boldsymbol{\theta }),
\label{J}
\ee
\be
\boldsymbol{K}_\gamma( \boldsymbol{\theta }) = \int_0^\infty \boldsymbol{u}_{\boldsymbol{\theta }}(x) \boldsymbol{u}_{\boldsymbol{\theta }}^T(x) f_{\boldsymbol{\theta }}^{1+2\gamma}(x) dx = \boldsymbol{R}_{1+2\gamma}( \boldsymbol{\theta }).
\label{K}
\ee
Then some little algebra shows that the proposed Wald-type test statistic is given by
\begin{align*}
&  W_{n}=n(\widehat{\sigma}_{\beta}-\sigma_{0})^{2}\left[   \boldsymbol{M}^T(\btheta)
J_{\beta}^{-1}(\widehat{\sigma}_{\beta},\widehat{p}_{\beta})K_{\beta
}(\widehat{\sigma}_{\beta},\widehat{p}_{\beta})J_{\beta}^{-1}(\widehat{\sigma
}_{\beta},\widehat{p}_{\beta})\boldsymbol{M}(\btheta)\right]  ^{-1}\\
&  =\frac{n(\widehat{\sigma}_{\beta}-\sigma_{0})^{2}\left[  r_{\beta
+1}^{(1,1)}(\widehat{\sigma}_{\beta},\widehat{p}_{\beta})r_{\beta+1}%
^{(2,2)}(\widehat{\sigma}_{\beta},\widehat{p}_{\beta})-\left(  r_{\beta
+1}^{(1,2)}(\widehat{\sigma}_{\beta},\widehat{p}_{\beta})\right)  ^{2}\right]
^{2}}{%
\begin{pmatrix}
-r_{\beta+1}^{(2,2)}(\widehat{\sigma}_{\beta},\widehat{p}_{\beta}) &
r_{\beta+1}^{(1,2)}(\widehat{\sigma}_{\beta},\widehat{p}_{\beta})
\end{pmatrix}%
\begin{pmatrix}
r_{2\beta+1}^{(1,1)}(\widehat{\sigma}_{\beta},\widehat{p}_{\beta}) &
r_{2\beta+1}^{(1,2)}(\widehat{\sigma}_{\beta},\widehat{p}_{\beta})\\
r_{2\beta+1}^{(1,2)}(\widehat{\sigma}_{\beta},\widehat{p}_{\beta}) &
r_{2\beta+1}^{(2,2)}(\widehat{\sigma}_{\beta},\widehat{p}_{\beta})
\end{pmatrix}%
\begin{pmatrix}
-r_{\beta+1}^{(2,2)}(\widehat{\sigma}_{\beta},\widehat{p}_{\beta})\\
r_{\beta+1}^{(1,2)}(\widehat{\sigma}_{\beta},\widehat{p}_{\beta})
\end{pmatrix}
}\\
&  =\frac{n\left\{  (\widehat{\sigma}_{\beta}-\sigma_{0})\left(
\frac{\widehat{p}_{\beta}}{\widehat{\sigma}_{\beta}}\right)  ^{\beta}\left[
\widetilde{r}_{\beta+1}^{(1,1)}(\widehat{p}_{\beta})\widetilde{r}_{\beta
+1}^{(2,2)}(\widehat{p}_{\beta})-\left(  \widetilde{r}_{\beta+1}%
^{(1,2)}(\widehat{p}_{\beta})\right)  ^{2}\right]  \right\}  ^{2}}{%
\begin{pmatrix}
-\widetilde{r}_{\beta+1}^{(2,2)}(\widehat{p}_{\beta}) & \widetilde{r}%
_{\beta+1}^{(1,2)}(\widehat{p}_{\beta})
\end{pmatrix}%
\begin{pmatrix}
\widetilde{r}_{2\beta+1}^{(1,1)}(\widehat{p}_{\beta}) & \widetilde{r}%
_{2\beta+1}^{(1,2)}(\widehat{p}_{\beta})\\
\widetilde{r}_{2\beta+1}^{(1,2)}(\widehat{p}_{\beta}) & \widetilde{r}%
_{2\beta+1}^{(2,2)}(\widehat{p}_{\beta})
\end{pmatrix}%
\begin{pmatrix}
-\widetilde{r}_{\beta+1}^{(2,2)}(\widehat{p}_{\beta})\\
\widetilde{r}_{\beta+1}^{(1,2)}(\widehat{p}_{\beta})
\end{pmatrix}
},
\end{align*}%
where
\begin{align*}
\widetilde{r}_{\gamma}^{(1,1)}(\widehat{p}_{\beta}) &  =\varepsilon_{0,\gamma
}(\widehat{p}_{\beta})-2\varepsilon_{\widehat{p}_{\beta},\gamma}%
(\widehat{p}_{\beta})+\varepsilon_{2\widehat{p}_{\beta},\gamma}(\widehat{p}%
_{\beta}),\\
\widetilde{r}_{\gamma}^{(1,2)}(\widehat{p}_{\beta}) &  =-\frac{1}%
{\widehat{p}_{\beta}}\varepsilon_{0,\gamma}(\widehat{p}_{\beta})+\left(
\log\widehat{p}_{\beta}+\frac{1}{\widehat{p}_{\beta}}\right)  \varepsilon
_{\widehat{p}_{\beta},\gamma}(\widehat{p}_{\beta})-\log\widehat{p}_{\beta
}\varepsilon_{2\widehat{p}_{\beta},\gamma}(\widehat{p}_{\beta})-\kappa
_{0,1,\gamma}(\widehat{p}_{\beta})+\kappa_{\widehat{p}_{\beta},1,\gamma
}(\widehat{p}_{\beta}),\\
\widetilde{r}_{\gamma}^{(2,2)}(\widehat{p}_{\beta}) &  =\frac{1}%
{\widehat{p}_{\beta}}\varepsilon_{0,\gamma}(\widehat{p}_{\beta})-\frac
{2}{\widehat{p}_{\beta}}\log\widehat{p}_{\beta}\varepsilon_{\widehat{p}%
_{\beta},\gamma}(\widehat{p}_{\beta})+(\log\widehat{p}_{\beta})^{2}%
\varepsilon_{2\widehat{p}_{\beta},\gamma}(\widehat{p}_{\beta})\\
&  +\frac{2}{\widehat{p}_{\beta}}\kappa_{0,1,\gamma}(\widehat{p}_{\beta
})+\kappa_{0,2,\gamma}(\widehat{p}_{\beta})-2\log\widehat{p}_{\beta}%
\kappa_{\widehat{p}_{\beta},1,\gamma}(\widehat{p}_{\beta}),\\
\varepsilon_{\alpha,\gamma}(p)  & =\gamma^{\frac{(p-1)\gamma+\alpha+1}{p}%
}\Gamma\left(  \frac{(p-1)\gamma+\alpha+1}{p}\right)  ,\\
\kappa_{\alpha,\delta,\gamma}(\widehat{p}_{\beta})  & =\widehat{p}_{\beta}%
\int_{0}^{\infty}(\log y)^{\delta}y^{(\widehat{p}_{\beta}-1)\gamma+\alpha}%
\exp\{-\gamma y^{\widehat{p}_{\beta}}\}dy.
\end{align*}

\section{Simulation Study}\label{sec:simulation}

In this section we present a simulation study to analyze the behavior of
the proposed Wald-type test statistics introduced in this paper with some classical
procedures for the same problems. We pay special attention to the robustness
issue.

\subsection{The Case of the Simple Null Hypothesis}
We have proposed  Wald-type test statistics for the simple null hypothesis
in (\ref{3.2}). We will now study the performance of the test  through simulation 
in case of the exponential model. The expression for the test statistic is simplified
in (\ref{3.7}). We want to test the hypothesis $H_0:\theta =2$
against the alternative $H_{1}:\theta \neq 2$. In the first case we have 
generated data from the $\mathcal{E}xp(2)$ distribution, and  the observed level (measured as the proportion of test statistics
exceeding the chi-square critical value in 10,000 
 replications) are presented in Figure \ref{fig:exp}(a). We have taken
 three proposed Wald-type test statistics for $\beta= 0.1,\ 0.2$ and 0.5, denoted by $W(\beta)$, and
 compared with the classical Wald test statistic. It is clear that the observed levels
 of all these tests are very close to the nominal level of 0.05. Note that 
 the classical Wald test statistic is a special case of the proposed family of test 
 statistics corresponding to $\beta=0$.
 
 Next, the same hypotheses were tested when the data were generated from the $%
\mathcal{E}xp(1)$ distribution. The observed power (obtained in a similar manner as
above) of the test is presented in Figure \ref{fig:exp}(b) under the alternative  hypothesis. 
It is noticed that the classical Wald statistic performs best in terms of the power of the test, and for large $\beta$ the test
statistics show slightly poor performance. However, this discrepancy decreases
rapidly with the sample size, and by the time $n\geq 50$, all the observed
powers are practically equal to one. In any case for $\beta=0.1$ or 0.2 the tests are almost as powerful as the classical Wald test.

\begin{figure}[tbp]
\centering%
\begin{tabular}{rl}
\includegraphics[height=7.5cm, width=8cm]{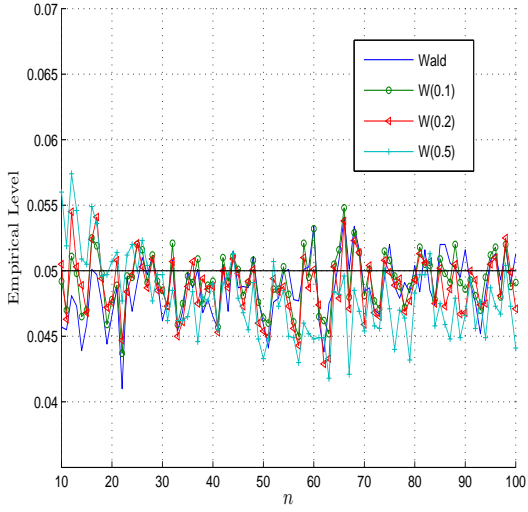}\negthinspace &
\negthinspace \includegraphics[height=7.5cm, width=8cm]{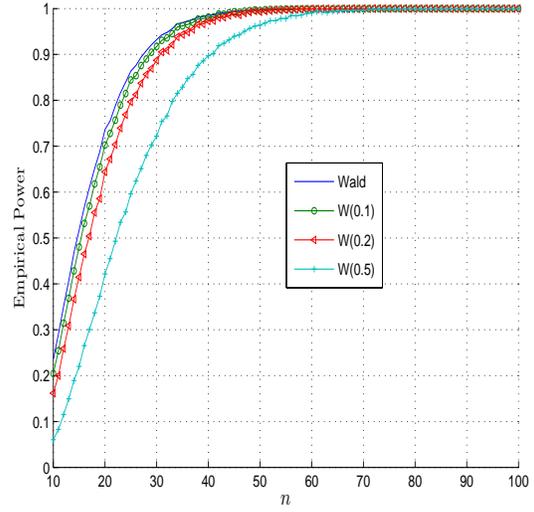} \\
\multicolumn{1}{c}{\textbf{(a)}} & \multicolumn{1}{c}{\textbf{(b)}} \\
\includegraphics[height=7.5cm, width=8cm]{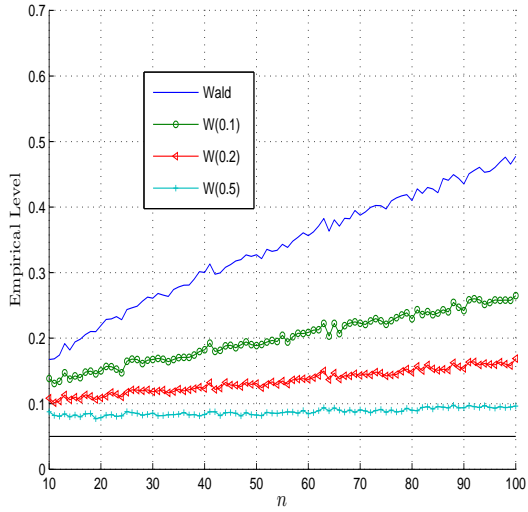}\negthinspace &
\negthinspace \includegraphics[height=7.5cm, width=8cm]{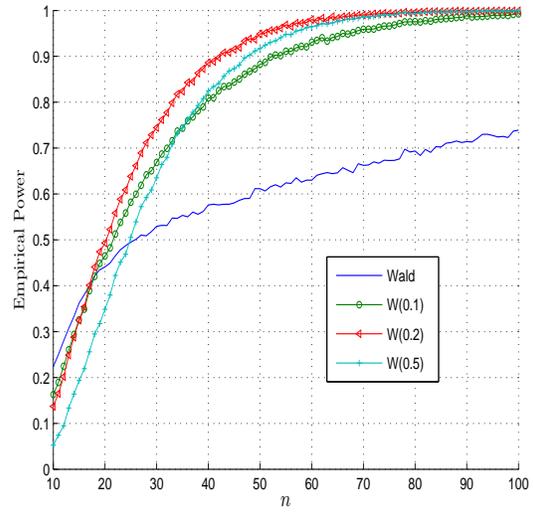} \\
\multicolumn{1}{c}{\textbf{(c)}} & \multicolumn{1}{c}{\textbf{(d)}}%
\end{tabular}%
\caption{
(a) Simulated levels of different tests for pure exponential data; (b) simulated powers of different tests for pure exponential data;
(c) simulated levels of different tests for contaminated exponential data; (d) simulated powers of different tests for contaminated exponential data.
}
\label{fig:exp}
\end{figure}

To evaluate the stability of the level and the power of the tests under
contamination, we repeated the simulations for $H_0:\theta =2$ against $H_0:\theta
\neq 2$ with data generated from the exponential mixture $0.95\mathcal{E}xp(2)+0.05\mathcal{E}xp%
(10)$, and also tested the same hypotheses with data
generated from $0.95\mathcal{E}xp(1)+0.05\mathcal{E}xp(10)$ distribution.
 The results are given in Figures \ref{fig:exp}(c) and \ref{fig:exp}(d) respectively.
 
In this case there is a huge inflation in the observed level
of the classical Wald test statistic and somewhat smaller inflation in the tests with small values of $\beta$. But as $\beta$ increases
the resistant nature of the tests are clearly apparent. For larger values of $\beta$ the levels turn out to be more acceptable. The
opposite behavior is seen in case of power. There appears to be a complete
breakdown in power for the classical Wald test, but the power remains much more
stable for the proposed Wald-type test statistics for moderate and large values of $\beta$.

On the whole it appears to be fair to claim that for sample sizes equal to
or larger than 40 the efficiency of many of our proposed Wald-type tests are very close to
the efficiency of the classical Wald test, at least in the example studied here, but the robustness properties of our tests
are often significantly better than the classical Wald test in terms of maintaining the
stability of both the level and power.

\subsection{The Case of the Composite Null Hypothesis}
\subsubsection{The Normal Case} \label{sec:normal}
To explore the performance of our proposed Wald-type test statistics in case of
the composite null hypothesis, we have performed a simulation study for the family of
test statistics given in Example \ref{example_normal}. We considered the hypothesis $H_0:\mu =0$
against the alternative $H_{1}:\mu \neq 0$ with $\sigma ^{2}$ unknown when
data were generated from the $\mathcal{N}(0,1)$ distribution. Subsequently,
the same hypotheses were tested when the data were generated from the $%
\mathcal{N}(-1,1)$ distribution. The results are
given in Figures \ref{fig:normal}(a) and \ref{fig:normal}(b). In either case the
nominal level was $0.1$, and all tests were replicated 10,000  times.

It may be noticed that all the tests with large values of $\beta$ are slightly liberal for very small sample sizes and lead to
somewhat inflated observed levels. However, the observed levels settle down reasonably with
increasing sample size. The observed powers of the tests as given in
Figure \ref{fig:normal}(b) are, in fact, extremely close. In very small sample
sizes our proposed test statistics have slightly higher power than the Wald test, but this
must be a consequence of the observed levels of these tests being higher
than the latter for such sample sizes. On the whole the proposed Wald-type tests appear to
be quite competitive to the classical Wald test for pure normal data.

\begin{figure}[tbp]
\centering%
\begin{tabular}{rl}
\includegraphics[height=7.5cm, width=8cm]{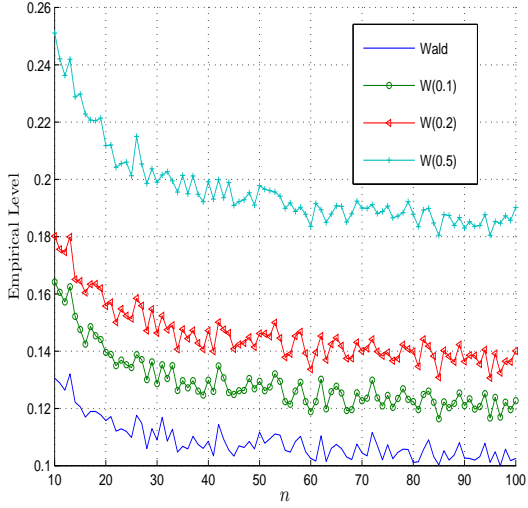}\negthinspace &
\negthinspace \includegraphics[height=7.5cm, width=8cm]{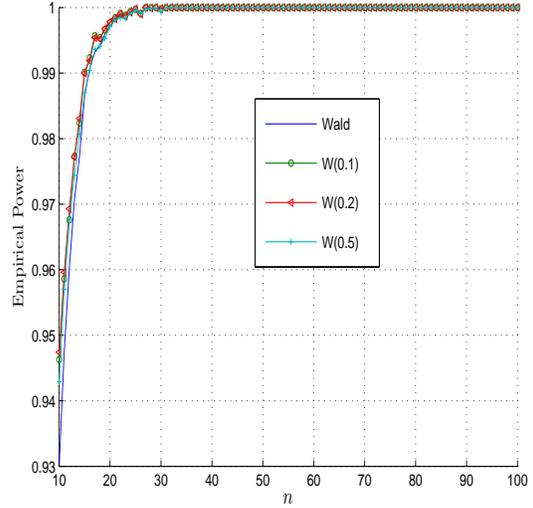} \\
\multicolumn{1}{c}{\textbf{(a)}} & \multicolumn{1}{c}{\textbf{(b)}} \\
\includegraphics[height=7.5cm, width=8cm]{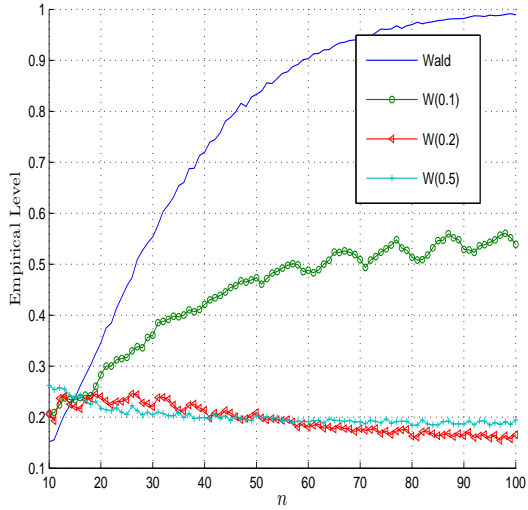}\negthinspace &
\negthinspace \includegraphics[height=7.5cm, width=8cm]{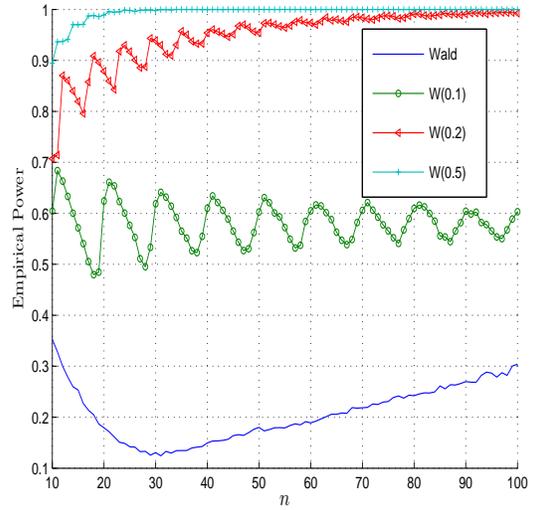} \\
\multicolumn{1}{c}{\textbf{(c)}} & \multicolumn{1}{c}{\textbf{(d)}}%
\end{tabular}%
\caption{
(a) Simulated levels of different tests for pure normal data; (b) simulated powers of different tests for pure normal data;
(c) simulated levels of different tests for contaminated normal data; (d) simulated powers of different tests for contaminated normal data.
}
\label{fig:normal}
\end{figure}

Now, we show the performance of the proposed tests under
contaminated data. So, we have tested the same hypothesis, but the data have been
generated from the $0.9\mathcal{N}(0,1)+0.1\mathcal{N}%
(10,1)$ distribution. The observed levels are shown in Figure \ref{fig:normal}(c).
 We notice that there is a drastic and severe inflation in the observed level
of the classical Wald test. As $\beta$ increases, however,
the resistant nature of the tests are clearly apparent. By the time $\beta
= 0.2$, the levels have already settled down to acceptable values. 

Finally, we have generated data from the normal mixture $0.9\mathcal{N}(-1,1)+0.1\mathcal{N}(10,1)$,
and the power functions are plotted in Figure \ref{fig:normal}(d). There is a complete
breakdown in power for the classical Wald test and the proposed Wald-type tests corresponding to the small values of $\beta$, but the power remains quite
stable for values of $\beta$ equal to 0.2 or greater.

A similar conclusion as the previous simulation study can be drawn in this case.
For sample sizes equal to
or larger than 30 the efficiency of the proposed Wald-type tests for $\beta$ greater than 0.2 are very close to
the efficiency of the classical Wald test, but the robustness properties of those tests
are much better than the Wald test in terms  of both the level and power.

In this simulation we have taken the 10\% significance level instead of the usual 5\%. Moreover, the contamination proportion is also different from the previous section. This is just to demonstrate that our tests produce analogous results in a variety of simulation set ups. The findings for the 5\% simulation results in this case, not presented here, are along similar lines. 

\subsubsection{The Weibull Case}
As we have mentioned before, it is important to demonstrate the properties of the proposed method in models other than the normal so that one has a better idea about the scope of the method. Accordingly we performed tests of composite hypotheses under the Weibull model in the spirit of Section \ref{sec:normal}. 
Let us consider the hypothesis defined in (\ref{w}), where $\sigma_0$ is taken to be 1.5. In the first study we have 
generated data from the $\mathcal{W}(1.5,1.5)$ distribution. The plot for the observed level for the hypothesis $H_0:\sigma = 1.5$ against the two sided alternative is given in Figure \ref{fig:weibull}(a), where we have used 10,000 replications. Next
the same hypotheses were tested when the data were generated from the $%
\mathcal{W}(1,1.5)$ distribution. The observed power function is plotted in Figure \ref{fig:weibull}(b) for different values of $\beta$. The powers are remarkably close. In all cases the nominal level was $0.05$.

To evaluate the stability of the level and the power of the tests under
contamination, we repeated the tests with data generated from the Weibull mixture $0.95\mathcal{W}(1.5,1.5)+0.05\mathcal{W}%
(10,1.5)$, and then from $0.95\mathcal{W}(1,1.5)+0.05\mathcal{W}(10,1.5)$. In either case the first, larger component is our target.
 In Figure \ref{fig:weibull}(c), the levels of the statistics under the contamination of first type are presented indicating the stability of levels for moderately large values of $\beta$. Figure \ref{fig:weibull}(d) demonstrates the stability of powers under contaminated data of the second type for the same values of $\beta$.

 \begin{figure}[tbp]
\centering%
\begin{tabular}{rl}
\includegraphics[height=7.5cm, width=8cm]{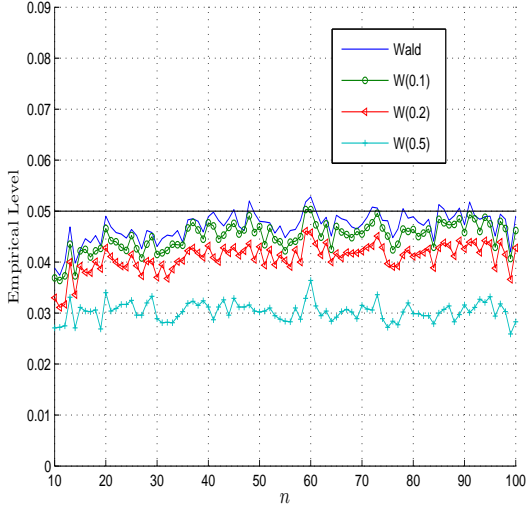}\negthinspace &
\negthinspace \includegraphics[height=7.5cm, width=8cm]{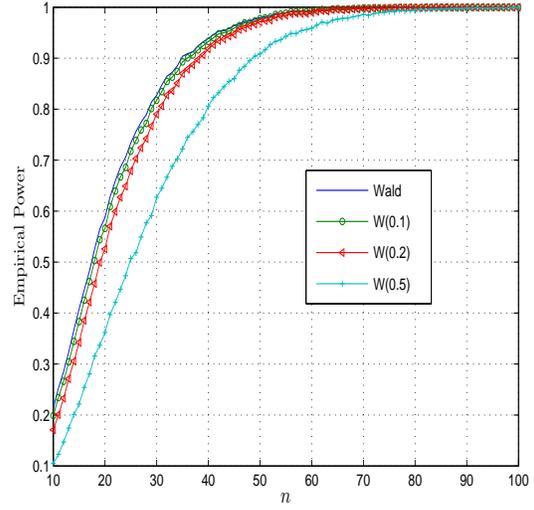} \\
\multicolumn{1}{c}{\textbf{(a)}} & \multicolumn{1}{c}{\textbf{(b)}} \\
\includegraphics[height=7.5cm, width=8cm]{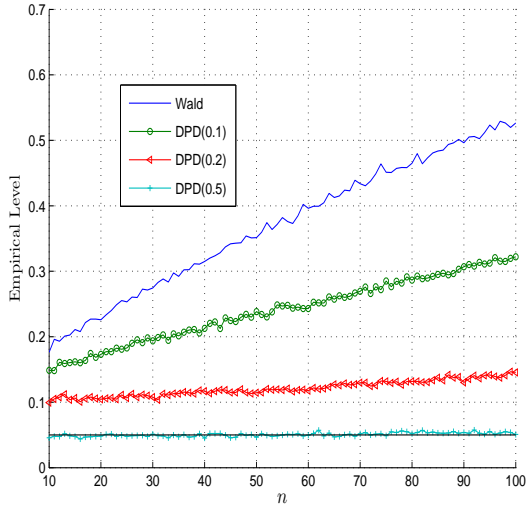}\negthinspace &
\negthinspace \includegraphics[height=7.5cm, width=8cm]{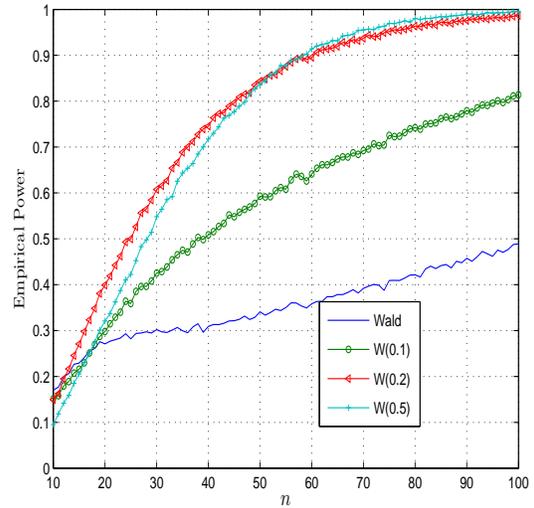} \\
\multicolumn{1}{c}{\textbf{(c)}} & \multicolumn{1}{c}{\textbf{(d)}}%
\end{tabular}%
\caption{
(a) Simulated levels of different tests for pure Weibull data; (b) simulated powers of different tests for pure Weibull data;
(c) simulated levels of different tests for contaminated Weibull data; (d) simulated powers of different tests for contaminated Weibull data.
}
\label{fig:weibull}
\end{figure}

\subsubsection{Comparison with Other Robust Tests}
In this section we compare the proposed Wald-type test with some other popular robust tests. For comparison we have used a parametric test -- the Winsorized $t$-test proposed by \cite{dixon1968approximate}, as well as three non-parametric tests -- the one-sample Kolmogorov-Smirnov test (KS-test), the two-sided Wilcoxon signed rank test and the two-sided sign test. The set up, the parameters taken for the simulation and the level of contamination are exactly the same as in Section \ref{sec:normal}. For the Winsorized $t$-test we have winsorized 15\% extreme observations from each side of the data set. It should be mentioned here that the null hypothesis is slightly different for the non-parametric tests. For the KS-test we first standardized the data using robust statistics. Then we test whether the corresponding distribution is a standard normal. The data are standardized using the transformation $Z = (X - \mu_0)/MadN$, where $\mu_0$ is the value of $\mu$ under $H_0$, and $MadN = 1.4826\ \times$ median absolute deviation about the median. In case of  Wilcoxon test and the sign test we test the null hypothesis that the median is zero, without making any assumption on the model distribution.  For comparison we have used only one proposed Wald-type test in this case, that corresponding to tuning parameter 0.2. To emphasize the robustness properties of these tests we have also included the classical Wald test in this investigation. The results are presented in Figure \ref{fig:normal_comp}. 

Figure \ref{fig:normal_comp}(a) shows that the observed levels of the Winsorized $t$-test, the KS-test and  Wilcoxon test are very close to the nominal level of 0.1 for the pure normal data. For small sample sizes the proposed Wald-type test is liberal, whereas the sign test is a little bit conservative. The observed powers of all the tests in Figure \ref{fig:normal_comp}(b) rapidly converge to unity in very small sample sizes. The result  in Figure \ref{fig:normal_comp}(c) is very interesting; it shows, for the contaminated data, all the tests except the proposed Wald-type test for $\beta  = 0.2$ fail to maintain the nominal level of 0.1. The observed level of the sign test is close to the nominal level for small sample sizes, but eventually as sample size increases it also breaks down. The powers of the tests for the contaminated data are plotted in  Figure \ref{fig:normal_comp}(d); all the robust tests show stable power. For the small sample sizes the proposed Wald-type test shows better power, however, it may be due to its inflated level in the pure data. Therefore, for the contaminated data, the proposed Wald-type test does significantly better than the others in holding on to its nominal level as well as power. On the whole, the proposed Wald-type tests are not only superior to the classical Wald test under contamination, but they also appear to be competitive or better than the other popular robust tests as far as this simulation study is concerned.

\begin{figure}[tbp]
\centering%
\begin{tabular}{rl}
\includegraphics[height=7.5cm, width=8cm]{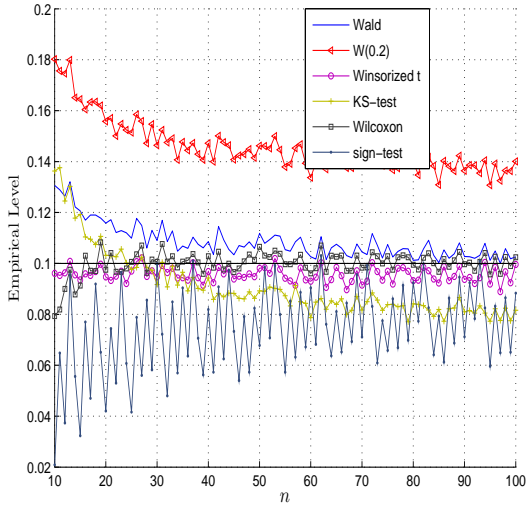}\negthinspace &
\negthinspace \includegraphics[height=7.5cm, width=8cm]{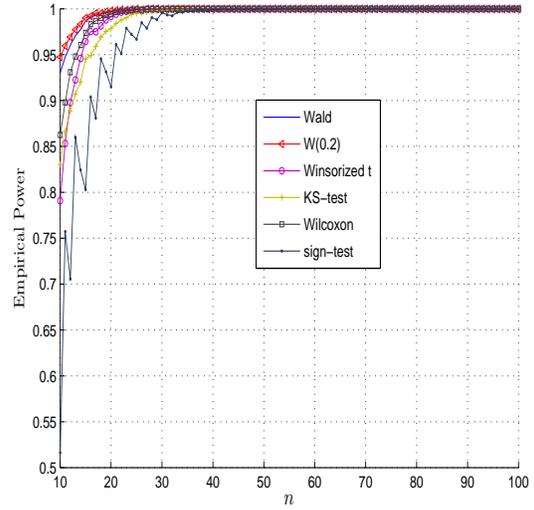} \\
\multicolumn{1}{c}{\textbf{(a)}} & \multicolumn{1}{c}{\textbf{(b)}} \\
\includegraphics[height=7.5cm, width=8cm]{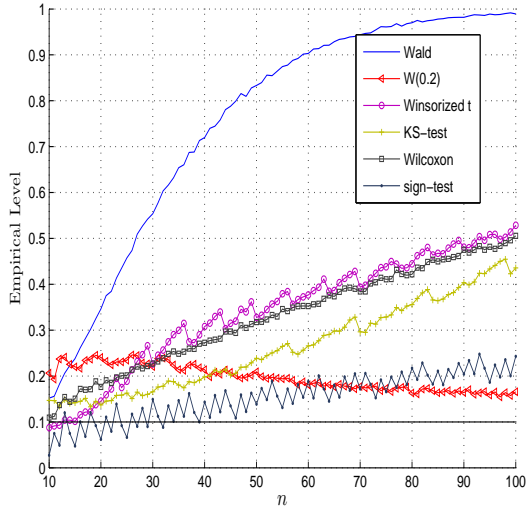}\negthinspace &
\negthinspace \includegraphics[height=7.5cm, width=8cm]{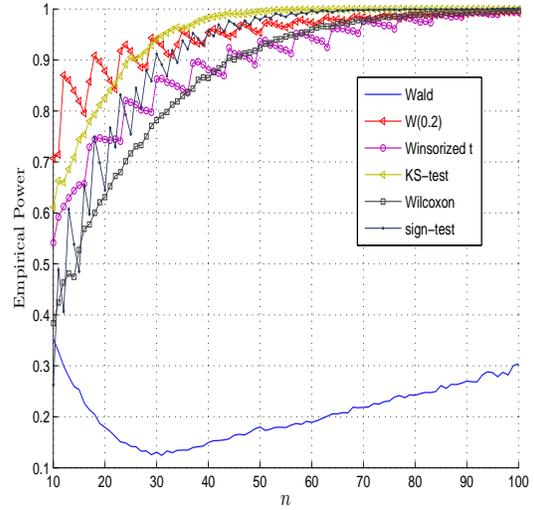} \\
\multicolumn{1}{c}{\textbf{(c)}} & \multicolumn{1}{c}{\textbf{(d)}}%
\end{tabular}%
\caption{Simulated levels and powers of different robust tests for pure and contaminated
data in case of the normal distribution.}
\label{fig:normal_comp}
\end{figure}

\subsection{Real Data Examples}

\subsubsection{Leukemia Data}
Let us consider the leukemia data set given in \cite{MR922042}, and \cite{gross1975survival}.
Table \ref{TAB:Leukemia} gives (100 times) the white blood cell counts of patients who
had acute myelogenous leukemia. We assume an exponential distribution with parameter
$\theta$  models this data. One fact that is immediately noticeable is that
there are two huge outlying observations at 1000 with respect to the
exponential model, whereas other observations appear to be reasonable
with respect to the same. The maximum likelihood estimator of $\theta$ for the full
data is 246.41, but if we delete the two outliers it comes out to be 138.75. 
We consider testing the null hypothesis $H_0: \theta = 140$
against $H_1: \theta \neq 140$ for the full data. The $p$-value of the classical Wald test is 0.0024, 
so in the presence of the outliers the classical Wald test rejects the null hypothesis. 
For the outlier deleted data the $p$-value of the classical Wald test becomes 0.9733.
This extreme turnaround illustrates that the classical Wald test is highly influenced by very small proportion of outlying observations. 
On the other hand, the proposed Wald-type tests for large values of $\beta$ lead to similar inference in all
situations. Figure \ref{fig:Leukemia_composite}(a) shows the $p$-values of the proposed Wald-type tests for the full data as well as for the  
outlier deleted data. This stable
behavior of the test statistics based on  MDPDEs for the full data
approximately coincides with the stability of the minimum  density power divergence
estimates of $\theta$ itself, obtained under an exponential model, which
is presented in Figure \ref{fig:Leukemia_composite}(b). The minimum density power
divergence estimators for the full data and the outlier deleted data are
practically identical for $\beta > 0.45$. So the
robustness of the test statistic is clearly linked to the robustness of the
estimator.

\begin{table}
\caption{Leukemia Data}
\label{TAB:Leukemia}
\begin{center}
\begin{tabular}{lcccccccccccccccc}
\hline
Patient & 1 & 2 & 3 & 4 & 5 & 6 & 7 & 8 & 9 & 10 & 11 & 12 & 13 & 14 & 15 & 16\\
Count & 23 & 7.5  & 43  & 26 &  60 &  105 &  100 &  170 &  54 &  70 &  94 &  320 &  350 &  1000 & 
520 &  1000
\\ \hline
\end{tabular}%
\end{center}
\end{table}

\begin{figure}
\centering%
\begin{tabular}{rl}
\includegraphics[height=7.5cm, width=8cm]{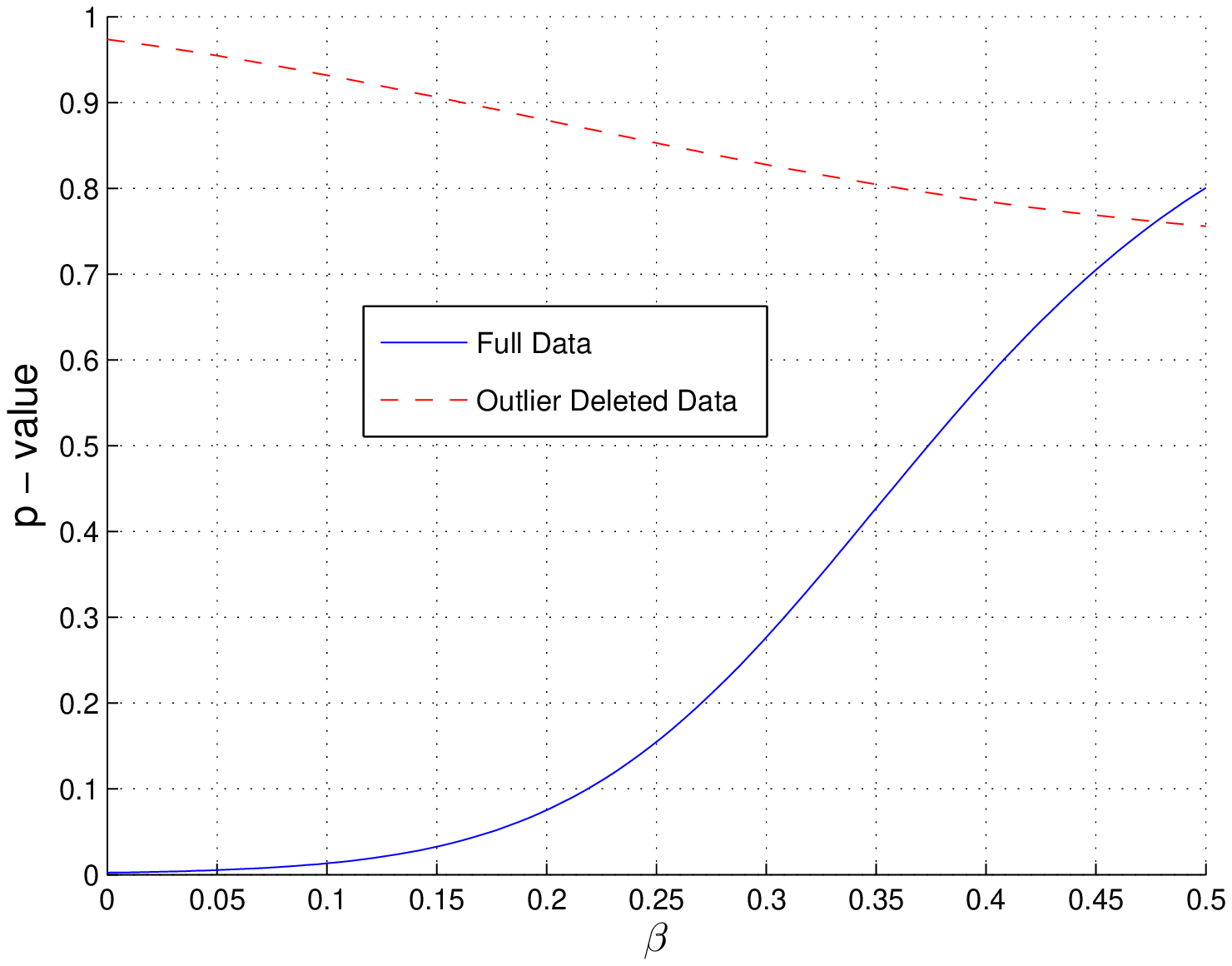}\negthinspace &
\negthinspace \includegraphics[height=7.5cm, width=8cm]{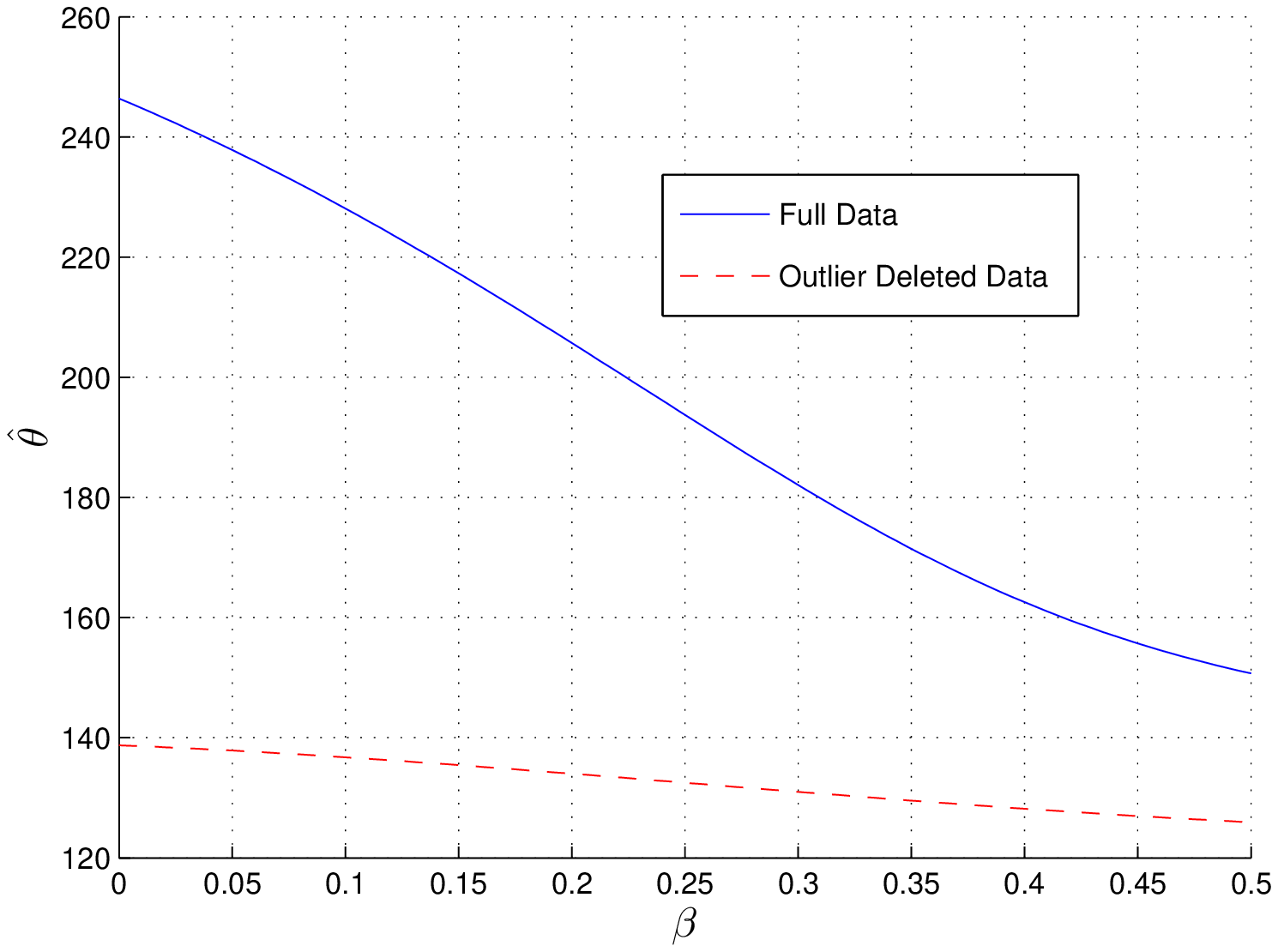} 
\end{tabular}%
\caption{(a) Two sided $p$-values of the proposed Wald-type tests, and 
(b) estimates of $\protect\theta$ for different values of $\protect\beta$ in case of the Leukemia data.}
\label{fig:Leukemia_composite}
\end{figure}

\subsubsection{Telephone-Fault Data}
\label{SEC:Telephone_example}

\cite{MR909365} and \cite{MR999667} have presented and analyzed data from an experiment to test a method of reducing faults on telephone lines. The data  consist of the ordered
differences between the inverse test rates and the inverse control rates in
14 matched pairs of areas (see Table \ref{TAB:telephone-line-faults}). For simplicity 
these data are modeled as a random sample from a normal distribution with mean $\mu$ and
standard deviation $\sigma$.  Figure \ref{normal_paper_eps} presents a
kernel density estimate for these data, the normal model fits using the estimates of $\mu$ and $\sigma$ based on the
maximum likelihood estimators, as well as the minimum density power divergence estimators (with tuning parameter $0.15$), and  the minimum
Hellinger distance estimators \citep{MR999667}. The first observation of this dataset  produces a small bump to the extreme left of the figure and is obviously a huge outlier with respect to the
normal model; the remaining 13 observations appear to be reasonable
with respect to the same. Therefore, if we ignore  the first observation,
these data would have a nice unimodal structure which could be well modeled
by an appropriate normal density. Figure \ref{normal_paper_eps} shows that the minimum Hellinger distance estimates and the minimum
density power divergence estimates automatically ignore the effect of the outlier and give  good fits for the main part of the data. On the other hand, the maximum likelihood estimates try to
be inclusive and generate a result which neither models the outlier deleted
data, nor provides a fit to the outlier component. 

\begin{figure}
\centering
\raisebox{-0cm}{\parbox[b]{8.5339cm}
{\begin{center}
\includegraphics[
height=7.5278cm,
width=8.5339cm
]{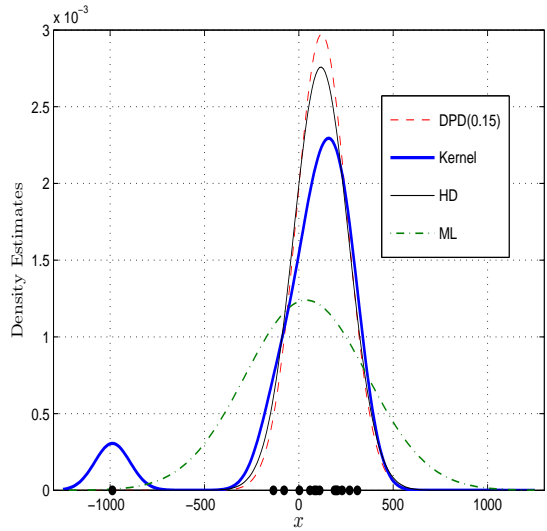}\end{center}}}
\caption{Kernel density estimate and different normal fits for the
Telephone-Fault data. }
\label{normal_paper_eps}
\end{figure}

Let us now consider testing of the null hypothesis $H_0: \mu = 0$
against $H_1: \mu \neq 0$, where $\sigma$ is unknown. 
For the full data, the classical Wald test  fails to reject the null hypothesis due to the presence of
the large outlier (two sided $p$-value is 0.6584); however the robust
Hellinger deviance test \citep{MR999667} comfortably rejects the null (two
sided $p$-value based on the chi-square null distribution is 0.0061), as
does the classical Wald test based on the cleaned data after the removal of the large
outlier (two sided $p$-value is 0.0076).

\begin{table}[h]
\caption{Telephone-Fault Data}
\label{TAB:telephone-line-faults}
\begin{center}
\begin{tabular}{lcccccccccccccc}
\hline
Pair & 1 & 2 & 3 & 4 & 5 & 6 & 7 & 8 & 9 & 10 & 11 & 12 & 13 & 14 \\
Difference & $-988$ & $-135$ & $-78$ & 3 & 59 & 83 & 93 & 110 & 189 & 197 &
204 & 229 & 289 & 310 \\ \hline
\end{tabular}%
\end{center}
\end{table}

\begin{figure}
\centering%
\begin{tabular}{rl}
\includegraphics[height=7.5cm, width=8cm]{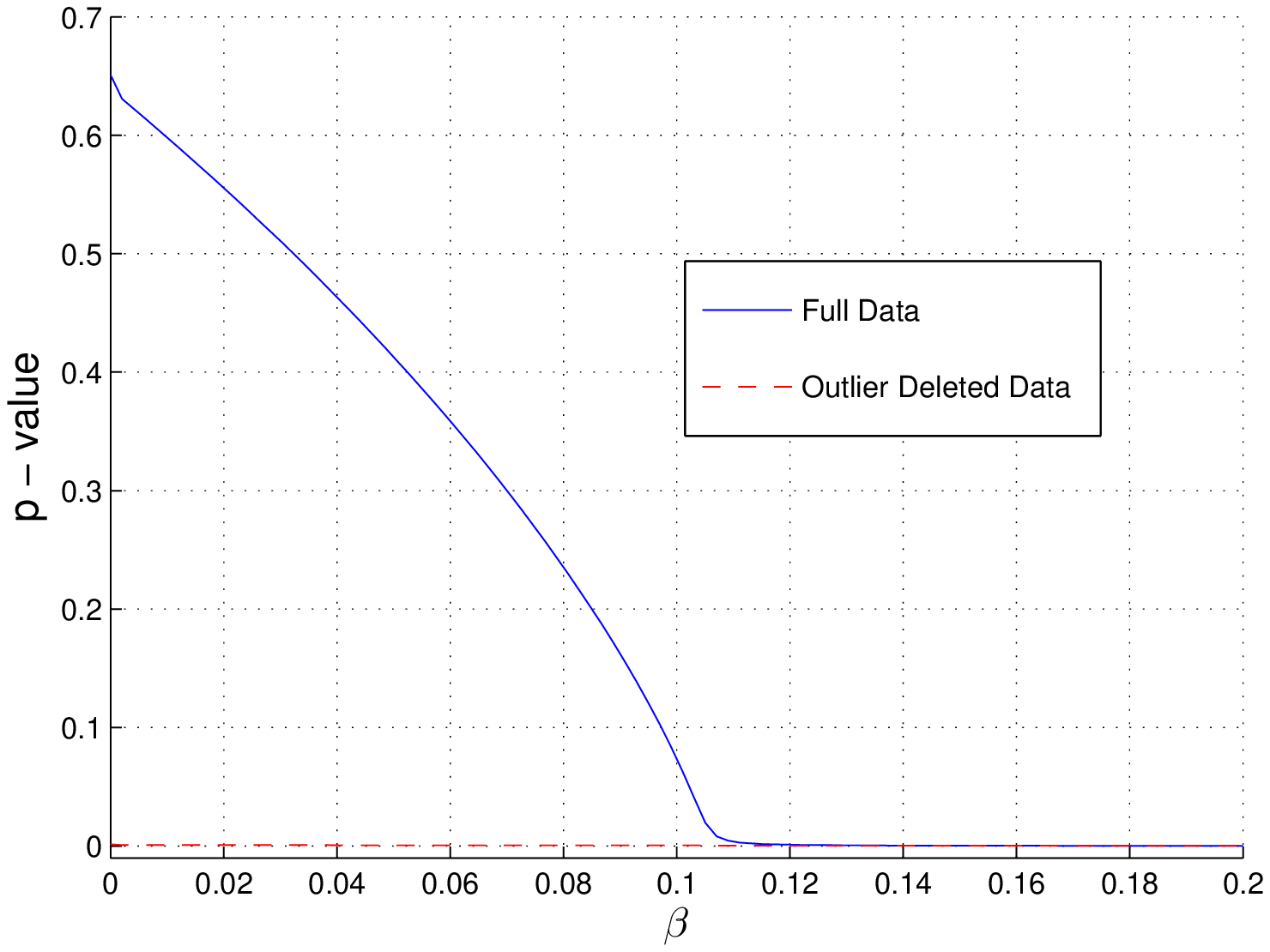}\negthinspace &
\negthinspace \includegraphics[height=7.5cm, width=8cm]{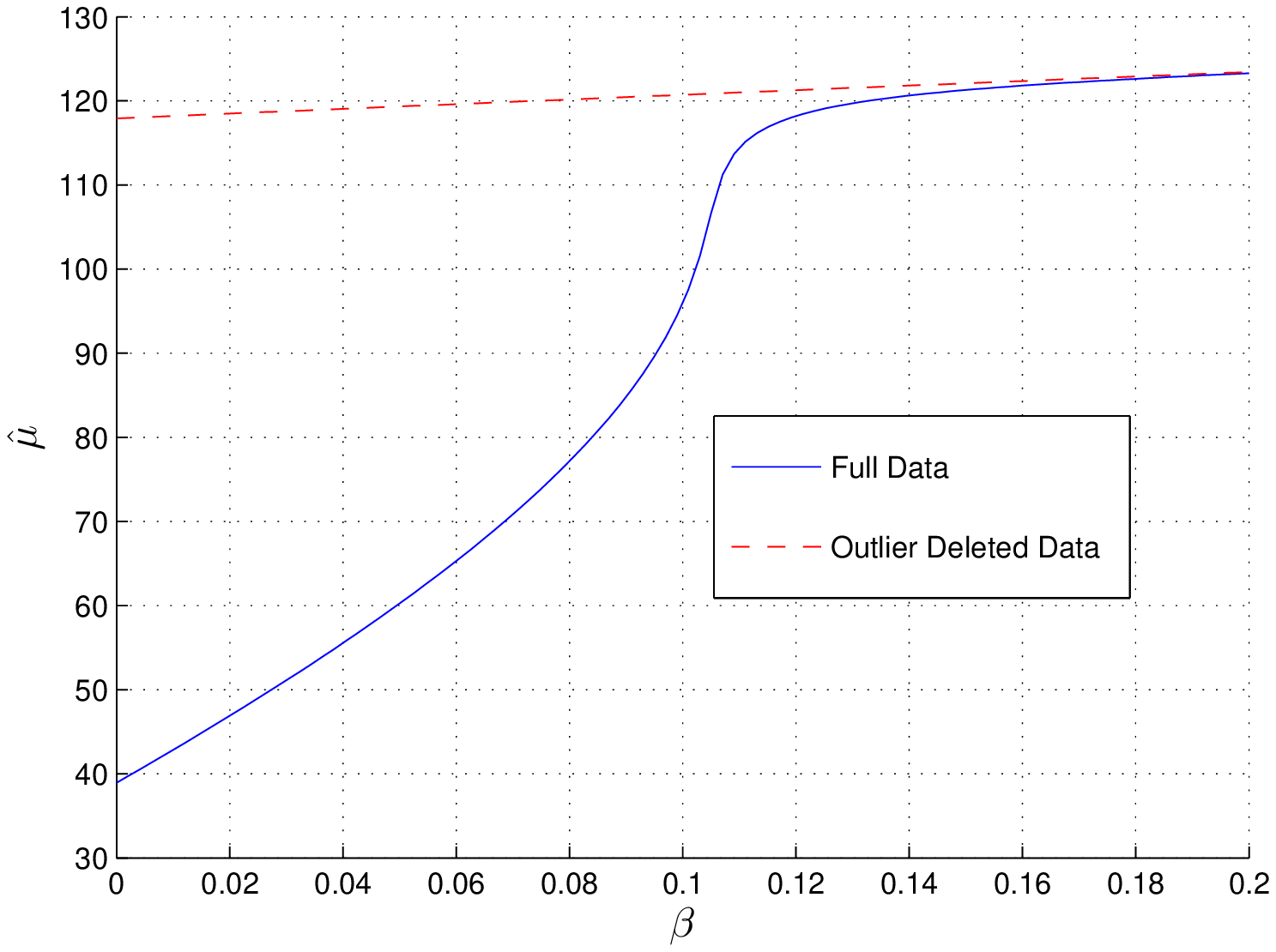} 
\end{tabular}%
\caption{(a) Two sided $p$-values of the proposed Wald-type tests, and 
(b) estimates of $\protect\mu$ for different values of $\protect\beta$ in case of the Telephone-Fault data.}
\label{fig:Tele_mu0_composite}
\end{figure}

Under the normal model, the usual estimates of $\mu$ (and $\sigma$) are
highly inflated due to the presence of the large outlier, and as a result
the likelihood ratio test under the normal model fails to reject the null
hypothesis. From the robustness perspective, this is precisely what we 
like to avoid, and here we demonstrate that proper choices of the tuning
parameter within the class of tests developed in this paper achieve this
goal. Here we analyze the performance of the proposed Wald-type tests
for different values of $\beta$. Figure \ref{fig:Tele_mu0_composite}(a) represents the $p$%
-values of the tests for different values of $\beta$ in a region of
interest. While it is clearly seen that the tests fail to reject the null
hypothesis for these data at very small values of $\beta$, the decision
turns around sharply, as $\beta$ crosses and goes beyond 0.1. On the other hand,
the $p$-values of the same test based on the outlier deleted data remain stable,
supporting rejection, at all values of $\beta$ as seen in Figure \ref{fig:Tele_mu0_composite}(a). The stable
behavior of the test statistic based on the MDPDE for the full data
approximately coincides with the stability of the MDPDE of $\mu$ itself, obtained under a two-parameter normal model, which
is presented in Figure \ref{fig:Tele_mu0_composite}(b). The minimum density power
divergence estimators for the full data and the outlier deleted data are
practically identical for $\beta > 0.12$. At least in this example, the
robustness of the test statistic is clearly linked to the robustness of the
estimator.

\begin{figure}
\centering
\raisebox{-0cm}{\parbox[b]{8.5339cm}
{\begin{center}
\includegraphics[
height=7.5278cm,
width=8.5339cm
]{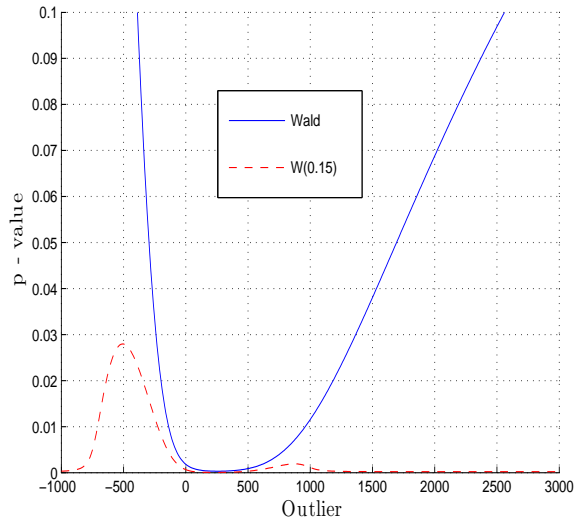}
\end{center}}}
\caption{Two sided $p$-values for the tests for the mean for the
Telephone-Fault data under the normal model against the first outlying
observation.}
\label{outliers}
\end{figure}

To further explore the robustness properties of the proposed Wald-type 
test statistics we look at the two sided $p$-values for different values of the
outlier. For this purpose we vary the first outlying observation in the
range from $-1000$ to 3000 by keeping the remaining 13 observations fixed.
Figure \ref{outliers} shows the corresponding $p$-values of the density
power divergence tests with $\beta=0.15$ as well as the
classical Wald test. It shows that initially the $p$-value of the density
power divergence test with $\beta=0.15$ increases as the first
observation moves away from the center of the data set, but after a certain
limit the test gradually nullifies the effect of the outlier. On the other
hand, the $p$-values of the classical Wald test keep on increasing with the outlier on either tail.


\subsubsection{Darwin's Plant Fertilization Data}

\label{SEC:Darwin_example}

Charles Darwin had performed an experiment which may be used to determine whether
self-fertilized plants and cross-fertilized plants have different growth
rates. In this experiment pairs of \textit{Zea mays} plants, one self and
the other cross-fertilized, were planted in pots, and after a specific time
period the height of each plant was measured. A particular sample of 15 such
pairs of plants led to the paired differences (cross-fertilized minus self
fertilized) presented in increasing order in Table \ref{TAB:Darwin-data}
(see  \citealp{darwin1878}).

\begin{table}[h]
\caption{Darwin's Plant Fertilization Data}
\label{TAB:Darwin-data}
\begin{center}
\begin{tabular}{lccccccccccccccc}
\hline
Pair & 1 & 2 & 3 & 4 & 5 & 6 & 7 & 8 & 9 & 10 & 11 & 12 & 13 & 14 & 15 \\
Difference & $-67$ & $-48$ & 6 & 8 & 14 & 16 & 23 & 24 & 28 & 29 & 41 & 49 &
56 & 60 & 75 \\ \hline
\end{tabular}%
\end{center}
\end{table}

As in the previous example, we assume a normal model for the paired
differences and test $H_0: \mu = 0$ against $H_1: \mu \neq 0$, i.e. we test whether
the mean of the paired differences is different from zero. The unconstrained
MDPDEs of $\mu$ 
under the normal model corresponding to different values of the tuning
parameter $\beta$ are presented in Figure \ref{fig:Darwin}(b).
The two negative paired differences appear to be geometrically well
separated from the rest of the data, though they are perhaps not as huge
outliers as the first observation in the Telephone-Fault data. These two
observations do have a substantial impact on the parameter estimates and the
test statistic for testing $H_0$ using the proposed Wald-type test statistic with
very small values of $\beta$, and it is instructive to compare the analysis to the
case where these two outliers have been removed from the data.
For small values of $\beta$, the two sided $p$-values of the test
statistics are drastically different for the full data and outlier deleted
cases (Figure \ref{fig:Darwin}(a)), but they get closer with increasing $%
\beta$, and they essentially coincide for $\beta \geq 0.3$. Once again
this seems to be directly linked to the robustness of the parameter
estimates; Figure \ref{fig:Darwin}(b), 
which also depicts the progression of the parameter estimates for the
outlier deleted data, clearly demonstrates that.
\begin{figure}
\centering%
\begin{tabular}{rl}
\includegraphics[height=7.5cm, width=8cm]{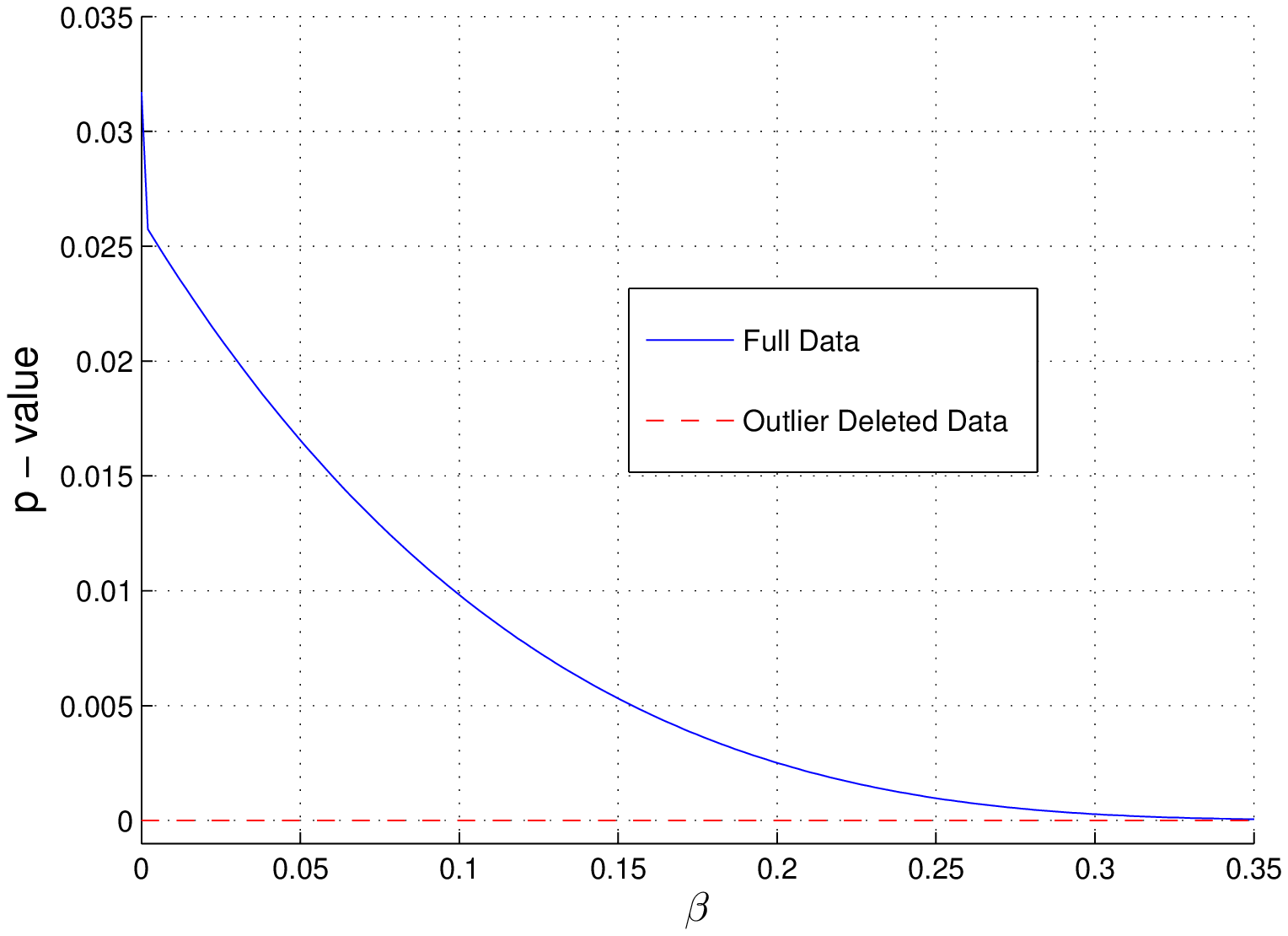}\negthinspace &
\negthinspace \includegraphics[height=7.5cm, width=8cm]{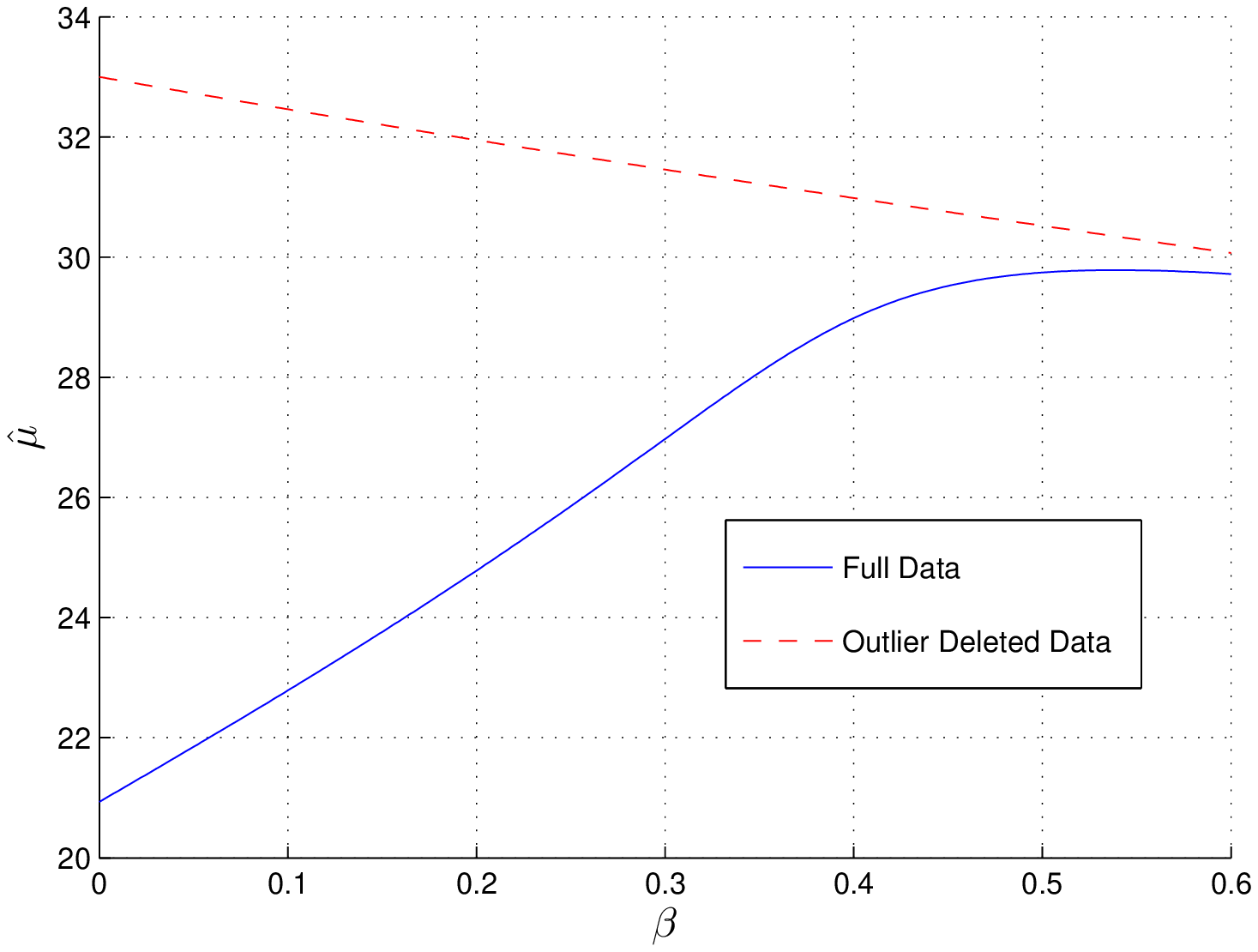} 
\end{tabular}%
\caption{(a) Two sided $p$-values of the proposed Wald-type tests, and 
(b) estimates of $\protect\mu$ for different values of $\protect\beta$ in case of Darwin's fertilization data.}
\label{fig:Darwin}
\end{figure}

\subsubsection{Air-conditioning System Failure Data}
\cite{proschan1963theoretical} gives 213 values of successive failure time of
air-conditioning system of each member of a fleet of Boeing
720 jet airplanes.  After roughly 2000 hours of service the planes received a
major overhaul; the failure interval containing major overhaul is omitted from
the listing since the length of that failure interval may have been affected by
major overhaul. The purpose of the data collection was to obtain information as to
the distribution of failure intervals for the air conditioning systems of all planes pooled together. Obvious
applications of such information would be in predicting reliability, scheduling maintenance, and providing spare parts. 
We have modeled the lifetime data as a random sample from a Weibull distribution with the shape parameter $p$ and the scale parameter $\sigma$.
However, there are few large observations in the data, which may be regarded as outliers with respect to the Weibull model. If we treat the observations which are greater than
400 (total count of such observations is only 7) as outliers we notice a clear difference in MLEs for the full data and the reduced data. The MDPDEs of $p$ for different values of $\beta$ are plotted in Figure \ref{fig:aircraft}(b). Now we expect that a similar phenomenon should be reflected in testing a hypothesis based on these estimators. We chose to test $H_0: p = 0.85$ against $H_1: p \neq 0.85$, where $\sigma$ is unknown. Figure \ref{fig:aircraft}(a) gives the $p$-values of different proposed Wald-type tests for different values of $\beta$. For the outlier deleted data all the tests clearly reject the null hypothesis, but in case of full data the classical Wald test as well as the proposed Wald-type tests with small tuning parameter $\beta$ (approximately $\beta \leq 0.2$) fail to reject the null hypothesis at 5\% level of significance. 
\begin{figure}
\centering%
\begin{tabular}{rl}
\includegraphics[height=7.5cm, width=8cm]{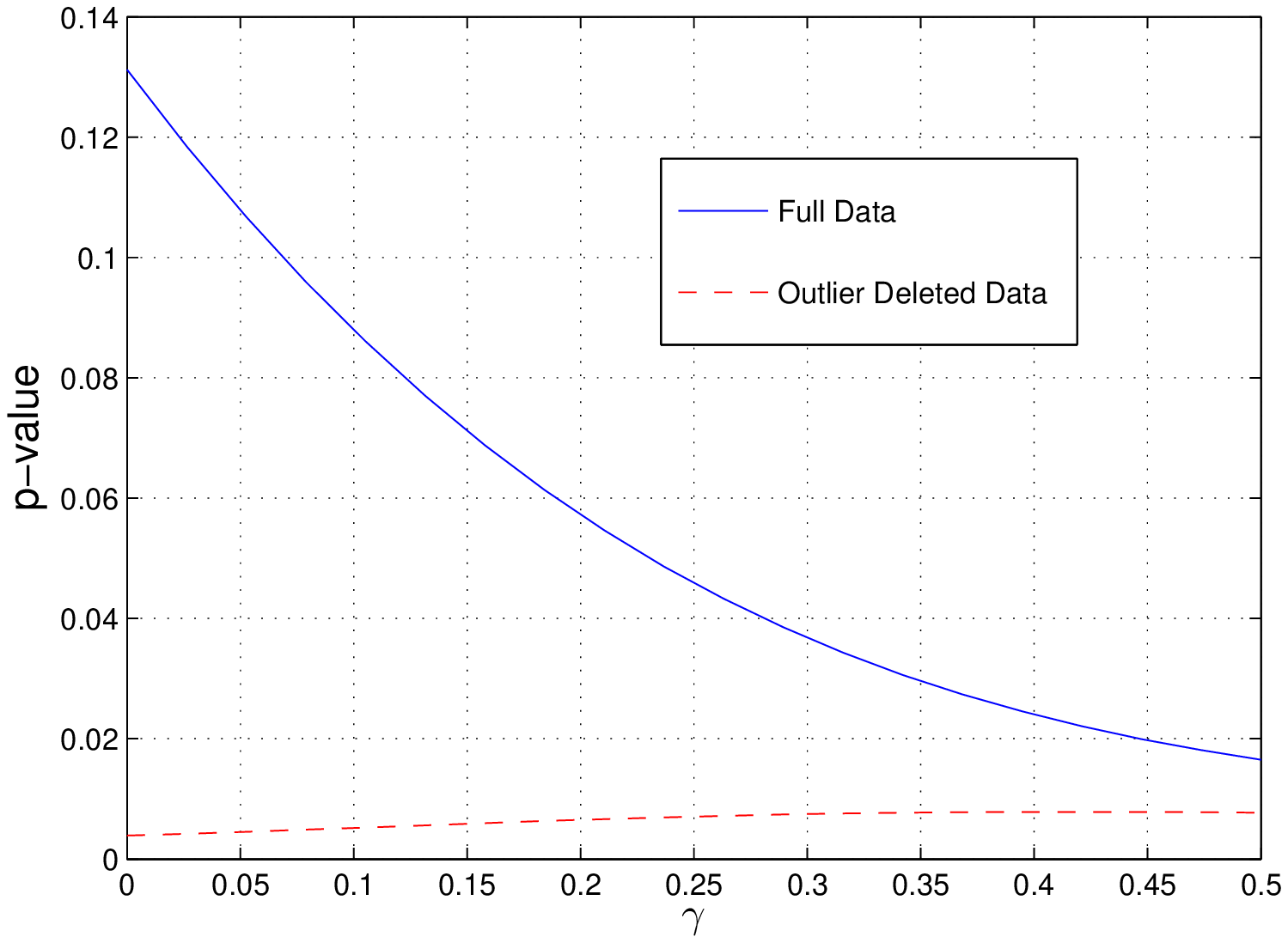}\negthinspace &
\negthinspace \includegraphics[height=7.5cm, width=8cm]{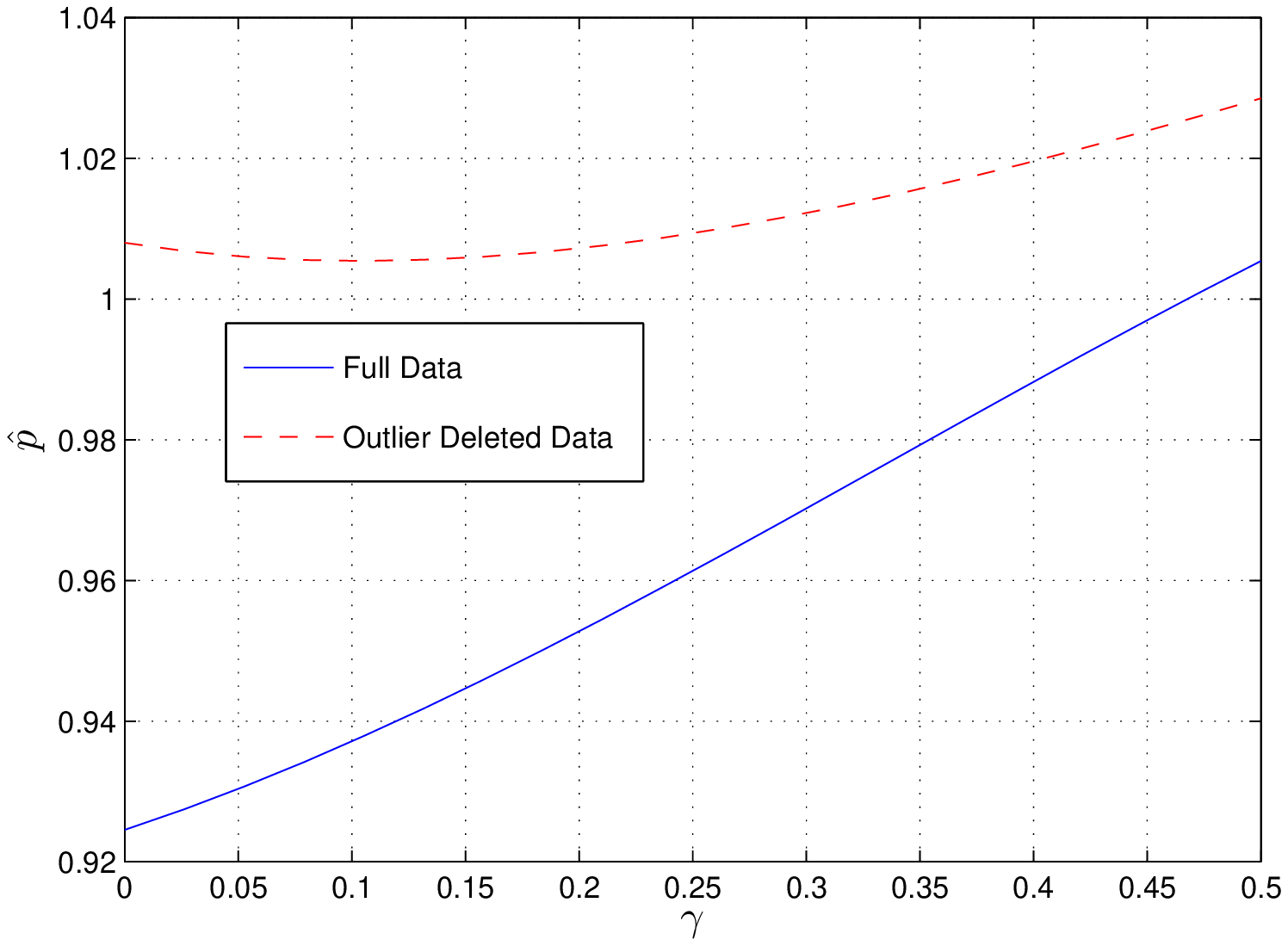} 
\end{tabular}%
\caption{(a) Two sided $p$-values of the proposed Wald-type tests, and 
(b) estimates of $\protect p$ for different values of $\protect\beta$ in case of Air-conditioning Equipment Failure data.}
\label{fig:aircraft}
\end{figure}
\subsubsection{One Sided Tests}

\label{SEC:One_Sided}

In many parametric hypothesis testing problems including testing for the mean
under the normal model the case of real interest involves a one sided
alternative. For the Telephone-Fault data problem, interest could lie in
determining whether the mean fault rate is higher than zero (rather that
simply whether it is different from zero). Darwin's interest in the
fertilization problem was, presumably, to determine whether cross
fertilization leads to a higher growth rate compared to self-fertilization;
indeed the test performed on Darwin's data by R. A. Fisher (reported in
 \citealp{fisher1935design}) considers the one sided alternative. In this subsection we
consider the relevant one sided alternatives for the two real data examples
presented earlier in this section under the normal model. For this purpose we consider the signed
Wald-type test statistic (the signed square root of the statistic presented in 
(\ref{norm_comp})) as in  \cite{MR999667}, and determine the relevant
(conservative) one sided $p$-value using the appropriate $t$-distribution.
For a scalar parameter $\theta$, given the hypothesis $H_0: \theta = \theta_0$ against $H_1: \theta > \theta_0$,
the proposed signed Wald-type statistic essentially turns out to be $W_n = \frac{n^{1/2}(\widehat\theta_\beta - \theta_0)}{\widehat\sigma_\beta}$, where
${\widehat\sigma_\beta}^2$ is the estimated variance of $n^{1/2}\widehat\theta_\beta$. The test rejects at level $\alpha$ when
$W_n$ exceeds the $100(1-\alpha)$th quantile of the $t$-distribution with $n-1$ degrees of freedom (or the corresponding
quantile of the standard normal distribution when $n$ is large). Similarly, one gets a critical region in the lower tail for the less than type alternative.

\bigskip

\noindent \textbf{Telephone-Fault data}:  \cite{MR999667} has reported the one
sided $p$-value (for the greater than alternative) in case of the Hellinger deviance test for the mean under the
normal model to be 0.0085 for the full data and 0.0093 for the outlier
deleted data. The one sided $p$-values for proposed signed Wald-type test
statistics corresponding to $\beta =0.15$ and $\beta =0.30$ are $0.0033$
and $0.0020$ respectively for the full data, and $0.0040$ and $0.0030$
respectively for the outlier deleted data, computed under the $t(13)$
distribution. On the other hand, the $p$-values for the full data and the
outlier deleted data are $0.33$ and $0.004$ for the classical one sided matched pairs 
Wald test. Clearly the outlier adversely affects the classical Wald test, but not the
proposed robust Wald-type tests. The mean of the ordered differences between the
inverse test rates and inverse control rates does appear to be actually greater than
zero.

\bigskip

\noindent \textbf{Darwin's plant fertilization data}: In this case we want
to test whether the mean of the paired differences (cross-fertilized minus
self-fertilized) is zero against the greater than alternative. The classical Wald test
performed by  \cite{fisher1935design} gives a statistic of $2.15$, with a one sided $p$%
-value of 0.025 computed under the $t(14)$ distribution; the corresponding
outlier deleted statistic has a one sided $p$-value of $0.00007$. On the
other hand, the one sided $p$-values for the proposed signed Wald-type test statistics for $%
\beta =0.15$ and $\beta =0.30$ are $0.0145$, $0.0027$ for the full data,
and $0.0001$, $0.0001$ for the outlier deleted data. Clearly the two large
outliers are influencing the decision of the classical Wald test, or the proposed Wald-type tests for very
small values of $\beta $, to the extent that the decision is reversed from
what would have been obtained with the outlier cleaned data at $1\%$ level
of significance; however larger values of $\beta $ lead to a more
consistent behavior of the tests. This data set requires stronger
downweighting compared to the Telephone-Fault data, as the outliers here are
less extreme, and therefore more difficult to identify. Under a suitable
robust test, it does appear that the mean growth of cross-fertilized plants
is higher than that of the self-fertilized plants.

\section{On the Choice of Tuning Parameter \texorpdfstring{$\beta$}{beta}}\label{sec8}

Finally, we have to give some guidance to the practitioner as to what value of the tuning
parameter $\beta$ is to be used in a particular case involving a sample of real data. Ideally
we would use the parameter $\beta = 0$ if the data were always pure, which in practice is hardly 
ever the case. Several authors (see \citealp{MR1665873} and \citealp{MR2830561}) have demonstrated
that a small positive value of $\beta$ often provides highly robust solutions with only a slight 
loss in efficiency. To analyze a particular set of real data we need a data driven choice of the 
tuning parameter which provides the necessary stability with as little loss in efficiency as possible. 

Our experience in repeated simulations and real data examples is that one generally does not require a choice of $\beta$ larger than 0.6 in most practical situations; quite often, in fact, $\beta = 0.5$ or lower is sufficient. Thus $\beta = 0.6$ can be viewed as a suitable conservative choice of the tuning parameter. However, at times this may lead
to a greater loss in efficiency than is necessary, so a more refined data driven choice of the 
tuning parameter would be helpful. In this context we prefer to begin with the MDPDE corresponding
to $\beta = 0.5$ as the pilot estimator, and determine the data driven choice of the tuning parameter
$\beta$ following the approach of \citet{Warwick}. This approach minimizes an empirical 
measure of the mean square error of the parameter estimate to determine the ``optimal'' tuning 
parameter. The pilot estimator is necessary to get the empirical measure of the bias. We note that 
\cite{MR3117102} also used the estimator corresponding to $\beta = 0.5$ as the pilot parameter
while considering robust estimation with independent but non-homogeneous data. 
One could also use $\beta = 0.6$, our conservative estimator of the tuning parameter, as the choice for constructing the pilot estimator; however our data analysis experience shows that this makes a minor difference if at all, and for reasons of efficiency we stick to the \cite{MR3117102} proposal. 

Naturally, the criterion to be considered in case of the testing problem and the estimation problem 
are not exactly the same. In case of the hypothesis testing problem, one should minimize a criterion 
which should involve a suitable linear combination of the amount of inflation in the level of the test
and  the difference of its contiguous power from one. We feel that it is much more difficult to construct an appropriate empirical 
measure of such a quantity particularly under composite alternatives. On the other hand we have demonstrated, throughout the numerical data examples, 
that the robustness of the proposed tests correspond almost exactly to the robustness of the MDPDEs. Thus
we feel that the optimal choice of $\beta$ determined as described in the previous paragraph would be a 
reasonable choice in the testing situation as well, and that remains our current recommendation. 

By way of illustration, this criterion leads to estimated optimal choices of $\beta$ to be 0.1919 and 0.5657 for 
the Telephone-Fault data and Darwin's plant fertilization data respectively.  These values are consistent with our earlier observations that for the Telephone-Fault data the first value is a massive outlier, whereas for Darwin's plant fertilization data the outliers are much closer to the overall cluster of observations, and larger values of $\beta$ will be required to eliminate their effect.

\section{Concluding Remarks}
\label{SEC:concluding}
In this paper we have constructed  generalized Wald-type tests based on  minimum density power
divergence estimators proposed by \cite{MR1665873}. 
These tests do not require any intermediate smoothing technique as in the case of
the Hellinger deviance test, so our proposed method appears to stand out
among  robust tests of hypotheses based on minimum distance methods. The proposed Wald-type tests are easy to construct, and avoid the 
problem of determining the rejection region based on  linear combinations of chi-square distributions
as in the case of \cite{MR3011625, basu2013}. These class of tests have a huge scope of application.
We have demonstrated improved results in different models including normal, exponential and Weibull
distributions. Some real data examples have also been analyzed, and we note that our methods can be useful in detecting either
kind of incorrect decision caused by a small number of extreme outliers. This is exemplified by the analysis of the Leukemia
data, where the classical test leads to a false positive, and the analysis of the Telephone-Fault data where the classical
test fails to detect a true positive. We trust that the proposed tests will be very useful practical tools for statistical data analysis.

\bigskip
\noindent
\textbf{Acknowledgments:} This work was partially supported by Grant MTM-2012-33740 and ECO-2011-25706. The authors gratefully acknowledge the suggestions of the editor, the associate editor and two anonymous referees which led to an improved version of the paper.

\bibliographystyle{abbrvnat}
 \bibliography{reference}

\begin{thebibliography}{25}
\providecommand{\natexlab}[1]{#1}
\providecommand{\url}[1]{\texttt{#1}}
\expandafter\ifx\csname urlstyle\endcsname\relax
  \providecommand{\doi}[1]{doi: #1}\else
  \providecommand{\doi}{doi: \begingroup \urlstyle{rm}\Url}\fi

\bibitem[Basu et~al.(1998)Basu, Harris, Hjort, and Jones]{MR1665873}
A.~Basu, I.~R. Harris, N.~L. Hjort, and M.~C. Jones.
\newblock Robust and efficient estimation by minimising a density power
  divergence.
\newblock \emph{Biometrika}, 85\penalty0 (3):\penalty0 549--559, 1998.

\bibitem[Basu et~al.(2011)Basu, Shioya, and Park]{MR2830561}
A.~Basu, H.~Shioya, and C.~Park.
\newblock \emph{Statistical Inference: The Minimum Distance Approach}.
\newblock CRC Press, Boca Raton, FL, 2011.

\bibitem[Basu et~al.(2013)Basu, Mandal, Martin, and Pardo]{MR3011625}
A.~Basu, A.~Mandal, N.~Martin, and L.~Pardo.
\newblock Testing statistical hypotheses based on the density power divergence.
\newblock \emph{Ann. Inst. Statist. Math.}, 65\penalty0 (2):\penalty0 319--348,
  2013.

\bibitem[Basu et~al.(2014)Basu, Mandal, Martin, and Pardo]{basu2013}
A.~Basu, A.~Mandal, N.~Martin, and L.~Pardo.
\newblock Density power divergence tests for composite null hypotheses.
\newblock \emph{arXiv preprint arXiv:1403.0330}, 2014.

\bibitem[Chant(1974)]{chant1974asymptotic}
D.~Chant.
\newblock On asymptotic tests of composite hypotheses in nonstandard
  conditions.
\newblock \emph{Biometrika}, 61\penalty0 (2):\penalty0 291--298, 1974.

\bibitem[Chernoff(1954)]{chernoff1954distribution}
H.~Chernoff.
\newblock On the distribution of the likelihood ratio.
\newblock \emph{Ann. Math. Statist}, 25\penalty0 (3):\penalty0 573--578, 1954.

\bibitem[Darwin(1878)]{darwin1878}
C.~Darwin.
\newblock \emph{The Effects of Cross and Self Fertilization in the Vegetable
  Kingdom}.
\newblock John Murray, London, 1878.

\bibitem[De~Angelis and Young(1992)]{de1992smoothing}
D.~De~Angelis and G.~A. Young.
\newblock Smoothing the bootstrap.
\newblock \emph{Internat. Statist. Rev.}, 60\penalty0 (1):\penalty0 45--56,
  1992.

\bibitem[Dixon and Tukey(1968)]{dixon1968approximate}
W.~J. Dixon and J.~W. Tukey.
\newblock Approximate behavior of the distribution of winsorized t
  (trimming/winsorization 2).
\newblock \emph{Technometrics}, 10\penalty0 (1):\penalty0 83--98, 1968.

\bibitem[Fisher(1966)]{fisher1935design}
R.~Fisher.
\newblock \emph{The Design of Experiments}.
\newblock Hafner Press, New York, 1966.

\bibitem[Fraser(1957)]{MR0093863}
D.~A.~S. Fraser.
\newblock Most powerful rank-type tests.
\newblock \emph{Ann. Math. Statist}, 28:\penalty0 1040--1043, 1957.

\bibitem[Ghosh and Basu(2013)]{MR3117102}
A.~Ghosh and A.~Basu.
\newblock Robust estimation for independent non-homogeneous observations using
  density power divergence with applications to linear regression.
\newblock \emph{Electron. J. Stat.}, 7:\penalty0 2420--2456, 2013.

\bibitem[Gross and Clark(1975)]{gross1975survival}
A.~J. Gross and V.~Clark.
\newblock \emph{Survival distributions: reliability applications in the
  biomedical sciences}, volume~11.
\newblock Wiley, New York, 1975.

\bibitem[Jones et~al.(2001)Jones, Hjort, Harris, and Basu]{MR1859416}
M.~C. Jones, N.~L. Hjort, I.~R. Harris, and A.~Basu.
\newblock A comparison of related density-based minimum divergence estimators.
\newblock \emph{Biometrika}, 88\penalty0 (3):\penalty0 865--873, 2001.

\bibitem[Lemonte(2013)]{lemonte2013nonnull}
A.~J. Lemonte.
\newblock Nonnull asymptotic distributions of the {LR}, {W}ald, score and
  gradient statistics in generalized linear models with dispersion covariates.
\newblock \emph{Statistics}, 47\penalty0 (6):\penalty0 1249--1265, 2013.

\bibitem[Pardo(2006)]{MR2183173}
L.~Pardo.
\newblock \emph{Statistical inference based on divergence measures}.
\newblock Chapman \& Hall/CRC, Boca Raton, FL, 2006.

\bibitem[Proschan(1963)]{proschan1963theoretical}
F.~Proschan.
\newblock Theoretical explanation of observed decreasing failure rate.
\newblock \emph{Technometrics}, 5\penalty0 (3):\penalty0 375--383, 1963.

\bibitem[Self and Liang(1987)]{self1987asymptotic}
S.~G. Self and K.-Y. Liang.
\newblock Asymptotic properties of maximum likelihood estimators and likelihood
  ratio tests under nonstandard conditions.
\newblock \emph{Journal of the American Statistical Association}, 82\penalty0
  (398):\penalty0 605--610, 1987.

\bibitem[Serfling(1980)]{MR595165}
R.~J. Serfling.
\newblock \emph{Approximation theorems of mathematical statistics}.
\newblock John Wiley \& Sons, New York, 1980.
\newblock ISBN 0-471-02403-1.
\newblock Wiley Series in Probability and Mathematical Statistics.

\bibitem[Simpson(1989)]{MR999667}
D.~G. Simpson.
\newblock Hellinger deviance tests: efficiency, breakdown points, and examples.
\newblock \emph{J. Amer. Statist. Assoc.}, 84\penalty0 (405):\penalty0
  107--113, 1989.

\bibitem[Susko(2013)]{susko2013likelihood}
E.~Susko.
\newblock Likelihood ratio tests with boundary constraints using data-dependent
  degrees of freedom.
\newblock \emph{Biometrika}, 100\penalty0 (4):\penalty0 1019--1023, 2013.

\bibitem[Teimouri et~al.(2013)Teimouri, Hoseini, and
  Nadarajah]{teimouri2013comparison}
M.~Teimouri, S.~M. Hoseini, and S.~Nadarajah.
\newblock Comparison of estimation methods for the weibull distribution.
\newblock \emph{Statistics}, 47\penalty0 (1):\penalty0 93--109, 2013.

\bibitem[Tiku et~al.(1986)Tiku, Tan, and Balakrishnan]{MR922042}
M.~L. Tiku, W.~Y. Tan, and N.~Balakrishnan.
\newblock \emph{Robust inference}.
\newblock Marcel Dekker, New York, 1986.
\newblock ISBN 0-8247-7532-5.

\bibitem[{Warwick} and {Jones}(2005)]{Warwick}
J.~{Warwick} and M.~{Jones}.
\newblock {Choosing a robustness tuning parameter.}
\newblock \emph{{J. Stat. Comput. Simulation}}, 75\penalty0 (7):\penalty0
  581--588, 2005.

\bibitem[Welch(1987)]{MR909365}
W.~J. Welch.
\newblock Rerandomizing the median in matched-pairs designs.
\newblock \emph{Biometrika}, 74\penalty0 (3):\penalty0 609--614, 1987.

\end{thebibliography}

\section*{Appendix}

\noindent \textbf{Proof of Theorem \ref{th:main}:} 
A first order Taylor expansion of $l( \btheta) $ at $%
\wtheta$ around $\btheta^*$
gives%
\begin{equation*}
l\left( \wtheta\right) -l\left( \boldsymbol{%
\theta }^*\right) =\left( \frac{\partial l\left( \btheta%
\right) }{\partial \btheta}\right)_{\boldsymbol{\theta =\theta 
}^*}\left( \wtheta-\btheta%
^*\right) +o_p\left( \left\Vert \wtheta-%
\btheta^*\right\Vert \right).
\end{equation*}%
Now the result follows easily, since the asymptotic distribution of $n^{1/2}\left(
l\left( \wtheta\right) -l\left( \boldsymbol{%
\theta }^*\right) \right) $ coincides with the asymptotic distribution
of $n^{1/2} \left( \frac{\partial l( \btheta) 
}{\partial \btheta}\right)_{\boldsymbol{\theta =\theta }^{\ast
}}\left( \wtheta-\btheta^{\ast
}\right).$

\noindent \textbf{Proof of Theorem \ref{Th:1}:} 
We have
\begin{equation*}
n^{1/2}\left( \wtheta-\btheta%
_0\right) =n^{1/2}\left( \wtheta-\boldsymbol{%
\theta }_{n}\right) +n^{1/2}\left( \btheta_{n}-\btheta%
_0\right) =n^{1/2}\left( \wtheta-\boldsymbol{%
\theta }_{n}\right) +\boldsymbol{d}.
\end{equation*}%
Under $H_{1,n}$ it follows that
\begin{equation*}
n^{1/2}\left( \wtheta-\btheta%
_{n}\right) \underset{n\rightarrow \infty }{\overset{\mathcal{L}}{\longrightarrow }}%
\mathit{N}_p(\boldsymbol{0,J}_\beta^{-1}( \btheta_0) 
\boldsymbol{K}_\beta( \btheta_0) \boldsymbol{J}%
_\beta^{-1}( \btheta_0) ),
\end{equation*}%
and 
\begin{equation*}
n^{1/2}\left( \wtheta-\btheta%
_0\right) \underset{n\rightarrow \infty }{\overset{\mathcal{L}}{\longrightarrow }}%
\mathit{N}_p(\boldsymbol{d,J}_\beta^{-1}( \btheta_0) 
\boldsymbol{K}_\beta( \btheta_0) \boldsymbol{J}%
_\beta^{-1}( \btheta_0) ).
\end{equation*}%
On the other hand 
\begin{equation*}
W_n=\boldsymbol{X}^{T}\boldsymbol{X},
\end{equation*}%
where
\begin{equation*}
\boldsymbol{X=}\left( \boldsymbol{J}_\beta^{-1}\left( \btheta%
_0\right) \boldsymbol{K}_\beta( \btheta_0) 
\boldsymbol{J}_\beta^{-1}( \btheta_0) \right)
^{-1/2}n^{1/2}\left( \wtheta-\boldsymbol{\theta 
}_0\right),
\end{equation*}%
and under $H_{1,n}$
\begin{equation*}
\boldsymbol{X}\underset{n\rightarrow \infty }{\overset{\mathcal{L}}{\longrightarrow }}%
\mathit{N}_p\left(\left( \boldsymbol{J}_\beta^{-1}\left( \btheta%
_0\right) \boldsymbol{K}_\beta( \btheta_0) 
\boldsymbol{J}_\beta^{-1}( \btheta_0) \right)
^{-1/2}\boldsymbol{d},\boldsymbol{I}_{p\times p}\right).
\end{equation*}%
Here $\boldsymbol{I}_{p\times p}$ is the identity matrix of order $p$. Therefore 
\begin{equation*}
W_n=\boldsymbol{X}^{T}\boldsymbol{X}\underset{n\rightarrow \infty }{%
\overset{\mathcal{L}}{\longrightarrow }}\chi_{p}^{2}(\delta )
\end{equation*}%
with $\delta =\boldsymbol{d}^{T}\boldsymbol{J}_\beta^{-1}\left( \boldsymbol{%
\theta }_0\right) \boldsymbol{K}_\beta\left( \btheta%
_0\right) \boldsymbol{J}_\beta^{-1}( \btheta_0) 
\boldsymbol{d}$; here $\chi_{p}^{2}(\delta )$ denotes a non-central chi-square 
distribution with $p$ degrees of freedom and non-centrality parameter $\delta .$

\bigskip
\noindent \textbf{Proof of Theorem \ref{theorem_composite}:} 
 Let $\btheta_0 \in \Theta_0$ be the true value of $\btheta$.
Using a Taylor series expansion we get
\begin{eqnarray}
\boldsymbol{m}\left( \wtheta\right)  &=&\boldsymbol{m}%
( \btheta_0) +\boldsymbol{M}(\btheta%
_0)^{T}\left( \wtheta-\btheta%
_0\right) +o_p\left( \left\Vert \wtheta-%
\btheta_0\right\Vert \right)  \nonumber\\
&=&\boldsymbol{M}(\btheta_0)^T\left( \wtheta -\btheta_0\right) 
+o_p\left( \left\Vert \wtheta -\btheta_0\right\Vert \right) ,
\label{m}
\end{eqnarray}%
because from equation (\ref{2.5}) we have $\boldsymbol{m}( \btheta_0) \boldsymbol{=0}_{r}.$
Now, under $H_0$, 
\begin{equation*}
n^{1/2}\left( \wtheta-\btheta%
_0\right) \underset{n \rightarrow \infty }{\overset{\mathcal{L}}{%
\longrightarrow }} N_p(\boldsymbol{0}_p,\boldsymbol{J}_\beta^{-1}\left( \boldsymbol{%
\theta }_0\right) \boldsymbol{K}_\beta\left( \btheta%
_0\right) \boldsymbol{J}_\beta^{-1}( \btheta_0) ).
\end{equation*}%
Therefore, from equation (\ref{m}) we get, under $H_0$, 
\begin{equation*}
n^{1/2}\boldsymbol{m}\left( \wtheta\right) \underset%
{n \rightarrow \infty }{\overset{\mathcal{L}}{\longrightarrow }} N_r
(\boldsymbol{0}_r,\boldsymbol{M} ^{T}\left( \btheta%
_0\right) \boldsymbol{J}_\beta^{-1}(\btheta_0)%
\boldsymbol{K}_\beta(\btheta_0)\boldsymbol{J}_{\beta
}^{-1}(\btheta_0)\boldsymbol{M}\left( \boldsymbol{%
\theta }_0\right) ).
\end{equation*}%
As rank$\left( \boldsymbol{M}(\btheta) \right) = r$, we get
\begin{equation*}
n\boldsymbol{m} ^{T}\left( \wtheta\right) \left[
\boldsymbol{M}^{T}(\btheta_0)\boldsymbol{J}_\beta^{-1}(%
\btheta_0)\boldsymbol{K}_\beta(\btheta_0)%
\boldsymbol{J}_\beta^{-1}(\btheta_0)\boldsymbol{M}(%
\btheta_0)\right] ^{-1}\boldsymbol{m}\left( \widehat{\boldsymbol{%
\theta }}_\beta\right) \underset{n \rightarrow \infty }{\overset{%
\mathcal{L}}{\longrightarrow }}\chi_{r}^{2}.
\end{equation*}
Now $\boldsymbol{M}^{T}\left( \wtheta\right)\boldsymbol{J}_\beta^{-1}
\left( \wtheta\right) \boldsymbol{K}_\beta\left( \wtheta\right)%
\boldsymbol{J}_\beta^{-1}\left( \wtheta\right)\boldsymbol{M}
\left( \wtheta\right)$ is a consistent estimator of \newline
$\boldsymbol{M}^{T}(\btheta_0)\boldsymbol{J}_\beta^{-1}(%
\btheta_0)\boldsymbol{K}_\beta(\btheta_0)%
\boldsymbol{J}_\beta^{-1}(\btheta_0)\boldsymbol{M}(%
\btheta_0)$. Hence, under $H_0$,
\begin{equation*}
n\boldsymbol{m} ^{T}\left( \wtheta\right) \left[
\boldsymbol{M}^{T}\left( \wtheta\right)\boldsymbol{J}_\beta^{-1}
\left( \wtheta\right) \boldsymbol{K}_\beta\left( \wtheta\right)%
\boldsymbol{J}_\beta^{-1}\left( \wtheta\right)\boldsymbol{M}
\left( \wtheta\right)\right] ^{-1}\boldsymbol{m}\left( \widehat{\boldsymbol{%
\theta }}_\beta\right) \underset{n \rightarrow \infty }{\overset{%
\mathcal{L}}{\longrightarrow }}\chi_{r}^{2}.
\end{equation*}

\bigskip
\noindent \textbf{Proof of Theorem \ref{Th:10}:} 
We note that $l^*\left(\wtheta, \wtheta\right)$ and 
$l^*\left(\wtheta, \btheta^*\right)$ have same asymptotic distribution, because
$\widehat{\boldsymbol{%
\theta }}_\beta\overset{p}{\underset{n\rightarrow \infty }{%
\longrightarrow }}\btheta^*$. Now 
a first order Taylor expansion of $l^*\left(\wtheta, \btheta^*\right)$ at 
$\wtheta$ around $\btheta^*$
gives%
\begin{equation*}
l^*\left(\wtheta, \btheta^*\right) -
l^*\left(\btheta^*, \btheta^*\right) =
\left( \frac{\partial l^*( \btheta, \btheta^*) }{%
\partial \btheta}\right)_{\btheta =\btheta^*
}^{T} \left( \wtheta
-\btheta^*\right) +o_p\left( \left\Vert 
\wtheta -\btheta^*\right\Vert
\right) .
\end{equation*}%
So the theorem is proved from the following result
\begin{equation*}
n^{1/2}\left( \wtheta -\btheta^* \right) \underset{%
n\rightarrow \infty }{\overset{\mathcal{L}}{\longrightarrow }} N_p \left(\boldsymbol{0}_p,
\boldsymbol{J}_\beta^{-1}(\btheta^*)\boldsymbol{K}_{\beta
}(\btheta^*)\boldsymbol{J}_\beta^{-1}(\btheta^*) \right).
\end{equation*}%

\noindent \textbf{Proof of Theorem \ref{Th:11}:} 
A Taylor series expansion of $\boldsymbol{m}\left( \widehat\btheta%
_\beta\right) $ around $\btheta_{n}$ yields%
\begin{equation*}
\boldsymbol{m}\left( \wtheta\right) =\boldsymbol{m}%
\left( \btheta_{n}\right) +\boldsymbol{M} ^{T}\left( \btheta%
_{n}\right) \left( \wtheta-\btheta%
_{n}\right) +o\left( \left\Vert \wtheta-\boldsymbol{%
\theta }_{n}\right\Vert \right) .
\end{equation*}%
From (\ref{mt}) we have 
\begin{equation}
\boldsymbol{m}\left( \wtheta\right) =n^{-1/2}\boldsymbol{M} ^{T}
( \btheta_0) \boldsymbol{d} +\boldsymbol{M} ^{T}\left( \btheta%
_{n}\right) \left( \wtheta-\btheta%
_{n}\right) +o\left( \left\Vert \wtheta-\boldsymbol{%
\theta }_{n}\right\Vert \right) +o\left( \left\Vert \btheta_{n}-%
\btheta_0\right\Vert \right) . \label{the}
\end{equation}%
Under $H_{1,n}$ we get $n^{1/2}\left( o\left( \left\Vert \wtheta-%
\btheta_{n}\right\Vert \right) +o\left( \left\Vert \btheta%
_{n}-\btheta_0\right\Vert \right) \right) =o_p(1) $
and%
\begin{equation*}
n^{1/2}\left( \wtheta-\btheta%
_{n}\right) \underset{n \rightarrow \infty }{\overset{\mathcal{L}}{%
\longrightarrow }}\mathit{N}_p(\boldsymbol{0}_p,\boldsymbol{J}_\beta^{-1}\left( \boldsymbol{%
\theta }_0\right) \boldsymbol{K}_\beta\left( \btheta%
_0\right) \boldsymbol{J}_\beta^{-1}( \btheta_0) ).
\end{equation*}%
So from (\ref{the}) we have 
\begin{equation*}
n^{1/2}\boldsymbol{m}\left( \wtheta\right) \underset%
{n \rightarrow \infty }{\overset{\mathcal{L}}{\longrightarrow }}\mathit{N%
}_r(\boldsymbol{M} ^{T}( \btheta_0) \boldsymbol{d,M} ^{T}
( \btheta_0) \boldsymbol{J}_\beta^{-1}\left( 
\btheta_0\right) \boldsymbol{K}_\beta\left( \btheta%
_0\right) \boldsymbol{J}_\beta^{-1}( \btheta_0) 
\boldsymbol{M}( \btheta_0) ).
\end{equation*}%
From (\ref{eqiv}) we get, under $H_{1,n}^*$, 
\begin{equation*}
n^{1/2}\boldsymbol{m}\left( \wtheta\right) \underset%
{n \rightarrow \infty }{\overset{\mathcal{L}}{\longrightarrow }}\mathit{N%
}_r(\boldsymbol{\delta}, \boldsymbol{M} ^{T}( \btheta_0) \boldsymbol{J}%
_\beta^{-1}( \btheta_0) \boldsymbol{K}_{\beta
}( \btheta_0) \boldsymbol{J}_\beta^{-1}\left( 
\btheta_0\right) \boldsymbol{M}( \btheta_0)
).
\end{equation*}%
We apply the following result concerning quadratic forms: If $\boldsymbol{%
Z\in }\mathit{N}_k\left( \boldsymbol{\mu ,\Sigma }\right) ,$ $\boldsymbol{%
\Sigma }$ is a symmetric projection of rank $k$ and $\boldsymbol{\Sigma \mu
=\mu ,}$ then $\boldsymbol{Z}^{T}\boldsymbol{Z}$ is a chi-square
distribution with $k$ degrees of freedom and non-centrality parameter $%
\boldsymbol{\mu }^{T}\boldsymbol{\mu .}$ Here the quadratic form is 
\begin{equation*}
W_{n}=\boldsymbol{Z}^{T}\boldsymbol{Z},
\end{equation*}%
where
\begin{equation*}
\boldsymbol{Z}=n^{1/2}\boldsymbol{m}\left( \widehat\btheta_{\beta
}\right) \left[ \boldsymbol{M}^{T}(\btheta_0)\boldsymbol{J}%
_\beta^{-1}(\btheta_0)\boldsymbol{K}_\beta(\boldsymbol{%
\theta }_0)\boldsymbol{J}_\beta^{-1}(\btheta_0)%
\boldsymbol{M}(\btheta_0)\right] ^{-1/2}.
\end{equation*}%
We know
\begin{equation*}
\boldsymbol{Z}\underset{n \rightarrow \infty }{\overset{\mathcal{L}}{%
\longrightarrow }}\mathit{N}_r\left( \left[ \boldsymbol{M}^{T}(\boldsymbol{%
\theta }_0)\boldsymbol{J}_\beta^{-1}(\btheta_0)%
\boldsymbol{K}_\beta(\btheta_0)\boldsymbol{J}_{\beta
}^{-1}(\btheta_0)\boldsymbol{M}(\btheta_0)%
\right] ^{-1/2}\boldsymbol{M} ^{T}( \btheta_0) %
\boldsymbol{d,I}_r\right) ,
\end{equation*}%
where $\boldsymbol{I}_r$ is the identity matrix of order $r$. Hence the
application of the result is immediate. The non-centrality parameter is 
\begin{equation*}
\boldsymbol{d}^{T}\boldsymbol{M}( \btheta_0) \left[ 
\boldsymbol{M}^{T}(\btheta_0)\boldsymbol{J}_\beta^{-1}(%
\btheta_0)\boldsymbol{K}_\beta(\btheta_0)%
\boldsymbol{J}_\beta^{-1}(\btheta_0)\boldsymbol{M}(%
\btheta_0)\right] ^{-1}\boldsymbol{M} ^{T}\left( \btheta%
_0\right) \boldsymbol{d}.
\end{equation*}
\end{document}